\numberwithin{equation}{section}
\def\A{\mathcal{A}}
\DeclareMathOperator{\ad}{ad}
\def\bA{\mathbb{A}}
\DeclareMathOperator{\Ad}{Ad}
\def\B{\mathcal{B}}
\def\bar{\overline}
\def\C{\mathbb{C}}
\def\cO{\mathcal{O}}
\DeclareMathOperator{\diag}{diag}
\def\dirac{\slashed{\partial}}
\def\End{\textup{End}}
\def\H{\mathcal{H}}
\def\id{\mathrm{id}}
\def\su{\mathfrak{su}}
\DeclareMathOperator{\tr}{Tr}
\def\tilde{\widetilde}
\def\volelt{\sqrt{g}\,\mathrm{d}^4x}
\def\antisquark{\overline{\widetilde{q}}}
\def\squark{\widetilde{q}}
\newcommand{\inpr}[2]{\langle #1 , #2 \rangle}
\newcommand{\rinpr}[2]{(#1,#2)}
\theoremstyle{plain}
\newtheorem{thm}{Theorem}[section]
\newtheorem{lem}[thm]{Lemma}
\newtheorem{prop}[thm]{Proposition}
\newtheorem{defin}[thm]{Definition}
\newtheorem{exmpl}[thm]{Example}
\newtheorem{cor}{Corrolary}[section]
\newbox\ncintdbox \newbox\ncinttbox
\title{Supersymmetric QCD and noncommutative geometry}
\author{Thijs van den Broek and Walter D. van Suijlekom}
\address{
Institute for Mathematics, Astrophysics and Particle Physics, 
Faculty of Science, Radboud University Nijmegen,
Heyendaalseweg 135, 6525 ED Nijmegen, The Netherlands
}
\date{18 March 2010}
\email{T.vandenBroek@science.ru.nl; waltervs@math.ru.nl}
\begin{document}
\begin{abstract}
We derive supersymmetric quantum chromodynamics from a noncommutative manifold, using the spectral action principle of Chamseddine and Connes. After a review of the Einstein--Yang--Mills system in noncommutative geometry, we establish in full detail that it possesses supersymmetry. This noncommutative model is then extended to give a theory of quarks, squarks, gluons and gluinos by constructing a suitable noncommutative spin manifold (i.e. a spectral triple). The particles are found at their natural place in a spectral triple: the quarks and gluinos as fermions in the Hilbert space, the gluons and squarks as bosons as the inner fluctuations of a (generalized) Dirac operator by the algebra of matrix-valued functions on a manifold. The spectral action principle applied to this spectral triple gives the Lagrangian of supersymmetric QCD, including soft supersymmetry breaking mass terms for the squarks. We find that these results are in good agreement with the physics literature.

\end{abstract}
\maketitle


\section{Introduction}
Over the last few decades, noncommutative geometry \cite{C94} has proven to be very successful in deriving models in high-energy physics from geometrical principles. This started with the particle theories studied by Connes and Lott from a noncommutative perspective \cite{CL91}, culminating in the work of Chamseddine and Connes \cite{CC96,CC97}. Therein, the full Standard Model of high-energy physics ---including the Higgs field--- was derived from a noncommutative manifold, through the so-called spectral action principle. This principle puts gauge theories such as the Standard Model on the same geometrical footing as Einstein's general theory of relativity
by deriving a Lagrangian from a noncommutative spacetime. For more details, see eg. Section \ref{ch:prel} below. More recently, in \cite{CCM07} (see also \cite{CM07}) this noncommutative model was enhanced to also include massive neutrinos while solving a technical issue (i.e. `fermion doubling') at the same time.

Ever since the early models introduced by Connes and Lott, there has been interest in the connection between noncommutative geometry and supersymmetry. An early instance of this subject is found in \cite{HT91a,HT91b}, and also \cite{Cha94,KW96}. However, this was all before the elegant spectral action principle was introduced, in particular the last article needed to turn the noncommutative algebra of coordinates turned into a superalgebra. Throughout the present paper, the algebra of noncommutative coordinates are $M_N(\C)$-valued functions on spacetime, that is, $\A= C^\infty(M, M_N(\C))$ (possibly with $N=3$). In the paradigm of noncommutative geometry, the gauge group consists of special unitary elements in this algebra; in this case $SU(\A) = C^\infty(M,SU(N))$. The supersymmetric gauge theories we will derive thus have $SU(N)$ as a gauge group. Then, as is intended, the supersymmetry will manifest itself as a transformation  between bosonic and fermionic degrees of freedom; this was suggested in \cite{CC97}. The natural place for the fermionic degrees of freedom is in the Hilbert space of spinors. As we will see below, the bosonic degrees of freedom are generated naturally by a generalized Dirac operator; this is very similar to the origin of the Higgs boson through the finite Dirac operator in the noncommutative description of the Standard Model \cite{CCM07}.

This article is organized as follows. We start by giving a short overview of the spectral action in noncommutative geometry, since it is the main technique exploited in this article. In Section \ref{ch:susy_in_EYM} we demonstrate that the Einstein--Yang--Mills system as derived from a noncommutative manifold ---which we recall in Section \ref{ch:eym}--- is actually supersymmetric. More precisely, it is $\mathcal{N}=1$ supersymmetric $SU(N)$ Yang--Mills theory, minimally coupled to gravity. 

Section \ref{ch:superQCD} forms the main part of this article, we define a noncommutative manifold on which the spectral action gives the Lagrangian of supersymmetric quantum chromodynamics (QCD). Besides a quark and a gluon we recognize their superpartners: the squark and the gluino. The squark appears naturally as the finite part of the inner fluctuations of the noncommutative manifold, besides the gluons as the continuous part. We discuss the several terms that appear in the spectral action and find that they coincide with the usual dynamics and interaction terms between gluons, gluinos, quarks and squarks that appear in the physics literature.

\section{Preliminaries}
\label{ch:prel}

In \cite{CC97}, Chamseddine and Connes introduced the spectral action principle as a powerful device to derive (potentially physical) Lagrangians from a noncommutative spin manifold. For convenience, we will start by quickly recalling their setup and approach. 

The basic device in noncommutative geometry \cite{C94} is a {\it spectral triple} $(\A,\H,D)$ consisting of a $*$-algebra $\A$ of bounded operators in a Hilbert space $\H$, and an unbounded self-adjoint operator $D$ in $\H$, such that
\begin{enumerate}
\item The commutator $[D,a]$ is a bounded operator;
\item The resolvent $(i+D)^{-1}$ of $D$ is a compact operator.
\end{enumerate}
One may further enrich this set of data by a self-adjoint operator $\gamma$ on $\H$ that commutes with all elements in $\A$ and is such that $\gamma^2=1$ ({\it grading}), and an anti-unitary operator $J$ on $\H$ ({\it reality}) such that the following hold
\begin{align}
 [[D, a], J b J^{-1}] = 0, \qquad [a, J b J^{-1}]=0; \qquad\forall a,b \in \A
\label{eq:order_one_condition}
\end{align}
These conditions are called the {\it first-order condition} and the {\it commutant property}, respectively. 
The following $\pm$-signs for the commutation relations between $J$, $\gamma$ and $D$,
\begin{table}[h!]
\begin{tabularx}{\textwidth}{X cccc X}
  \hline 
  & KO-dimension & $J^2 = \epsilon$ & $JD = \epsilon' DJ$ & $J\gamma = \epsilon''\gamma J$ & 
\\
         \hline
        & 0 & + & + & +&\\ 
        & 2 & $-$ & + & $-$ &\\
        & 4 & $-$ & + & + &\\
        & 6 & + & + & $-$&\\
        \hline
\end{tabularx}
\label{tb:ko-dimensions}
\end{table} 

\noindent determine the so-called KO-dimension of the real spectral triple. The notion of a real spectral triple generalizes Riemannian spin geometry to the noncommutative world. In fact, there exists a reconstruction theorem \cite{C96,C08} which states that if the algebra $\A$ in $(\A,\H,D)$ is commutative, then the spectral triple is of the form $(C^\infty(M), L^2(M,S), \dirac)$, canonically associated to a Riemannian spin manifold $M$. Here $S \to M$ is a spinor bundle and $\dirac$ is the corresponding Dirac operator.

\subsection{Inner fluctuations}
Rather than isomorphisms of algebras, a natural notion of equivalence for noncommutative ($C^*$-)algebras is Morita equivalence \cite{Rie74}. Given a spectral triple $(\A,\H,D)$ and an algebra $\B$ that is Morita equivalent to $\A$, one can define \cite{C96} a spectral triple $(\B,\H',D')$ on $\B$. Interestingly, upon taking $\B$ to be $\A$, this leads to a whole family of spectral triples $(\A, \H, D_A)$ where
$ D_A := D + A$ with $A \in \Omega_D^1(\A)$ self-adjoint with
\begin{align}
 \Omega^1(\A)  :=\big\{ \sum_i a_i[D, b_i]: a_i, b_i \in \A \big\}\label{eq:innerfluctuationform}.
\end{align}
The bounded operators $A$ are generally referred to as the \emph{inner fluctuations} of $D$ and may be interpreted as gauge fields. 

\medskip

When considering a real spectral triple $(\A, \H, D; J)$, we have the additional restriction that the real structure $J'$ of the spectral triple $(\A, \H', D'; J')$ on the Morita equivalent algebra should be compatible with the relation $J'D' = \epsilon'D'J'$. Upon taking $\B$ to be $\A$ again in such a case, the resulting spectral triple is of the form
$
  (\A, \H, D_A; J)
$, but now with
\begin{align}
D_A := A + \epsilon' JAJ^*.\label{eq:definnerf}
\end{align}
Note that in the commutative case these inner fluctuations vanish.
The \emph{gauge group} is defined to be $U(\A) := \{ u \in \A : uu^* = u^* u =1 \}$. It acts on elements $\psi$ in the Hilbert space via $\psi \mapsto u JuJ^{-1} \psi$. This induces an action on $D_A$ as $D_A \mapsto u D_A u^*$. Consequently, the inner fluctuations transform as
  \begin{align}
A \mapsto A^u := uAu^* + u[D, u^*]\label{eq:A_gauge_trans}.
  \end{align}
In the presence of a determinant on $\A$, we can restrict $U(\A)$ to $SU(\A)$ for which in addition the determinant is equal to the identity.

\subsection{The spectral action}
The above suggests that a (real) spectral triple defines a gauge theory, with the gauge fields arising as the inner fluctuations of the Dirac operator and the gauge group is given by the unitary elements in the algebra. It is thus natural to seek for gauge invariant functionals of $A \in \Omega^1_D(\A)$ and the so-called spectral action \cite{CC97} is the most natural. 

Let $(\A,\H,D; J)$ be a real spectral triple. Given the above operator $D_A$, a \emph{cut-off scale} $\Lambda$ and some positive, even function $f$ one can define (cf.~\cite{C96}, \cite{CC97}) the gauge invariant \emph{spectral action}:
\begin{align}
S_b[A]&:= \tr f(D_A/\Lambda).\label{eq:spectral_action}
\intertext{The cut-off parameter $\Lambda$ is used to obtain an asymptotic series for the spectral action; the physically relevant terms then appear with a positive power of $\Lambda$ as a coefficient.
Besides this bosonic action, one can define a fermionic action in terms of $\psi \in \H$ and $A \in \Omega^1_D(\A)$:}
S_f[A,\psi]&:= \langle \psi, D_A \psi \rangle \label{eq:ferm_action}
\end{align}
It was shown in \cite{CC97} that for a suitable choice of a spectral triple the spectral action equals the full Standard Model Lagrangian, including the Higgs boson. More recently, in \cite{CCM07} (see also \cite{CM07}) this was enhanced to also include massive neutrinos while solving a technical issue (i.e. `fermion doubling' as pointed out in \cite{LMMS97}) at the same time. We will not further go into details but refer to the mentioned literature instead.

\medskip
For convenience, we end this section by recalling some results on heat kernel expansions and Seeley--DeWitt coefficients, these will be useful later on; for more details we refer to \cite{Gil84}. If $V$ is a vector bundle on a compact Riemannian manifold $(M, g)$ and if $P:C^{\infty}(V)\to C^{\infty}(V)$ is a second-order elliptic differential operator of the form
\begin{equation}
 P = - \big(g^{\mu\nu}\partial_{\mu}\partial_{\nu} + K^{\mu}\partial_{\mu} + L)\label{eq:elliptic} 
\end{equation}
with $K^{\mu}, L \in \Gamma(\End(V))$, then there exist a unique connection $\nabla$ and an endomorphism $E$ on $V$ such that
\begin{equation}
  P = \nabla\nabla^* - E\label{eq:gilkey-operator}.
\end{equation}
Explicitly, we write locally $\nabla_{\mu} = \partial_{\mu} + \omega'_{\mu}$, where 
\begin{align} 
\omega_\mu' &= \frac{1}{2}(g_{\mu\nu}K_\nu + g_{\mu\nu}g^{\rho\sigma}\Gamma_{\rho\sigma}^{\nu})\label{eq:defomega}.
\end{align}
Using this $\omega'_\mu$ and $L$ we find $E \in \Gamma(\End(V))$ and compute for the curvature $\Omega_{\mu\nu}$ of $\nabla$:
\begin{subequations}
\begin{align}
  E &:= L - g^{\mu\nu}\partial_{\nu}(\omega_{\mu}') - g^{\mu\nu}\omega_\mu'\omega_\nu' + g^{\mu\nu}\omega_\rho'\Gamma_{\mu\nu}^\rho; \label{eq:defE} \\
 \Omega_{\mu\nu} &:= \partial_{\mu}(\omega'_{\nu}) - \partial_{\nu}(\omega'_{\mu}) - [\omega'_{\mu},\omega'_{\nu}]\label{eq:defOmega}.
\end{align}
\end{subequations}
In this situation we can make an asymptotic expansion (as $t \to 0$) of the trace of the operator $e^{-tP}$ in powers of $t$:
\begin{equation}
\tr\,e^{-tP} \sim \sum_{n \geq 0}t^{(n-m)/2}a_n(P),\qquad a_n(P) := \int_{M}a_n(x, P)\sqrt{g}d^m x\label{eq:gilkey},
\end{equation}
where $m$ is the dimension of $M$ and the coefficients $a_n(x, P)$ are called the \emph{Seeley--DeWitt coefficients}. It turns out \cite[Ch~4.8]{Gil84} that $a_n(x, P) = 0$ for $n$ odd and that the first three even coefficients are given by
\begin{subequations}
	\begin{align}
		a_0(x, P) &= (4\pi)^{-m/2}\tr(\id)\label{eq:gilkey1};\\
		a_2(x, P) &= (4\pi)^{-m/2}\tr(-R/6\,\id + E)\label{eq:gilkey2};\\
		a_4(x, P) &= (4\pi)^{-m/2}\frac{1}{360}\tr\big(-12R_{;\mu}^{\phantom{;\mu}\mu} + 5R^2 - 2R_{\mu\nu}R^{\mu\nu} \label{eq:gilkey3} \\
 &\qquad+ 2R_{\mu\nu\rho\sigma}R^{\mu\nu\rho\sigma} - 60RE + 180E^2 +60 E_{;\mu}^{\phantom{;\mu}\mu} + 30\Omega_{\mu\nu}\Omega^{\mu\nu}\big) \nonumber ,
	\end{align}
\end{subequations}
where $R_{;\mu}^{\phantom{;\mu}\mu} := \nabla^{\mu}\nabla_{\mu}R$ and the same for $E$. In all cases that we will consider, the manifold will be taken without boundary so that the terms $E_{;\mu}^{\phantom{;\mu}\mu}, R_{;\mu}^{\phantom{;\mu}\mu}$ vanish by Stokes' Theorem.

\medskip

This can be used in the computation of the spectral action as follows. Assume that the inner fluctuations give rise to an operator $D_A$ for which $D_A^2$ is of the form \eqref{eq:elliptic} on some vector bundle $V$ on a compact Riemannian manifold $M$. Then, on writing $f$ as a Laplace transform, we obtain
$$
f (D_A/\Lambda) = \int_{t>0} \tilde g(t) e^{-t D_A^2/\Lambda^2} ~ d t.
$$
In the case of a four-dimensional manifold the dominant terms of the expansion are found with Eq. \eqref{eq:gilkey} to be
  \begin{align}
    \tr f(D_A/\Lambda) &= 2\Lambda^4 f_4 a_0(D_A^2) + 2\Lambda^2 f_2 a_2(D_A^2) + a_4(D_A^2)f(0) + \mathcal{O}(\Lambda^{-2}),\label{eq:expansion_action_functional}
  \end{align}
where the $f_k$ are moments of the function $f$: 
\begin{align*}
	f_{k} := \int_{0}^{\infty} f(w)w^{k-1}dw; \qquad (k>0) \nonumber. 
\end{align*}

\section{The Einstein--Yang--Mills system}\label{ch:eym}

A spectral triple that will serve as the starting point for many of the subsequent considerations is the one that results in the Einstein--Yang--Mills system; it was introduced and studied in \cite{CC97} (cf. also \cite[Sect. 11.4]{CM07}). From now on, $M$ will denote a compact four-dimensional Riemannian spin manifold (with metric $g$). We take our spectral triple to be the tensor product of the canonical one on $(M,g)$, and the \emph{finite spectral triple} $(M_N(\mathbb{C}), M_N(\mathbb{C}), 0)$:
\begin{eqnarray}
  \A &=& C^{\infty}(M)\otimes M_N(\mathbb{C}) \simeq C^{\infty}(M, M_N(\mathbb{C}));\nonumber\\
 \H &=& L^2(M, S)\otimes M_N(\mathbb{C});\nonumber\\
     D &=& \slashed{\partial}_M\otimes \id\nonumber,\quad\text{with}\ \ \slashed{\partial}_M = i\gamma^{\mu}\nabla^S_{\mu},
\end{eqnarray}
where the representation of $M_N(\mathbb{C})$ on $M_N(\mathbb{C})$ is by left multiplication. We make the spectral triple \textit{real} by defining $J:\H\to\H$ by
\begin{equation}
 	J(s\otimes T) := J_M s\otimes T^*,\quad s \otimes T \in \H\label{eq:definition_J},
\end{equation}
where $J_M$ is the real structure on $L^2(M, S)$ ({i.e.} charge conjugation) and $T^*$ is the adjoint of the matrix $T$. 

The inner fluctuations \eqref{eq:definnerf} of this Dirac operator are seen to be of the form
\begin{equation}
 A + JAJ^* = \gamma^{\mu}\ad(A_\mu) \label{eq:fluctuations}, 
\end{equation}
where $\ad(A_\mu)T := [A_\mu, T], T \in M_N(\mathbb{C})$ and the minus sign giving rise to this commutator comes from the identity 
\begin{align}
J_M\gamma^{\mu}J_M^* &= J_M\gamma^{\mu}J_M^{-1} = -\gamma^{\mu}\label{eq:gammaJ_idn}.
\end{align}
The local expression for the fluctuated Dirac operator is then
\begin{equation}
  D_A = ie^{\mu}_{a}\gamma^{a}[(\partial_{\mu} + \omega_{\mu})\otimes\id + \id\otimes \mathbb{A}_\mu]\label{eq:dirac_full},
\end{equation}
where $\omega_\mu$ is the spin connection and $\mathbb{A}_\mu := -i \ad A_\mu$ is \emph{skew-Hermitian} due to the self-adjointness of $A_\mu$.\\

From the demand of self-adjointness of $D_A$, it follows that $\mathbb{A}$ is a $\mathfrak{u}(N)$-valued one-form.  
Now, $U(N)$ is not a simple group but $U(N) \simeq U(1)\times SU(N)$ resulting in $\mathfrak{u}(N) \simeq \mathfrak{u}(1)\oplus \mathfrak{su}(N)$ for the corresponding Lie algebras. But since $A + JAJ^{-1}$ is in the adjoint representation [cf.~\eqref{eq:fluctuations}] of $U(N)$, we retain only a traceless object. The symmetry group of the fluctuations is therefore effectively $SU(N)$.

\begin{prop}
\label{prop:squareD}
The square $D_A^2$ of the operator 
given in \eqref{eq:dirac_full} is of the form $ - \big(g^{\mu\nu}\partial_{\mu}\partial_{\nu} + K^{\mu}\partial_{\mu} + L)$ (cf. \eqref{eq:elliptic}) with
        \begin{align}
          K^{\mu} &= (2\omega^{\mu}-\Gamma^{\mu})\otimes\id + 2\,\id\otimes\mathbb{A}^{\mu}\nonumber\\
          L	  &= (\partial^{\mu}\omega_{\mu} + \omega^{\mu}\omega_{\mu} - \Gamma^{\mu}\omega_{\mu} + \tfrac{1}{4}R)\otimes\id + \id\otimes(\partial^{\mu}\mathbb{A}_{\mu} + \mathbb{A}^{\mu}\mathbb{A}_{\mu}) \nonumber\\
	           &\qquad + 2\,\omega^{\mu}\otimes \mathbb{A}_{\mu} - \Gamma^{\mu}\otimes\mathbb{A}_{\mu} - \tfrac{1}{2}\gamma^\mu\gamma^\nu\otimes \mathbb{F}_{\mu\nu}\nonumber,
        \end{align}
where $\Gamma^\nu = \Gamma^\nu_{\mu\lambda}g^{\mu\lambda}$ and $\mathbb{F}_{\mu\nu}$ is the curvature of the connection $\mathbb{A}_\mu$:
\begin{align}
   \mathbb{F}_{\mu\nu} := 
\partial_{\mu}\mathbb{A}_{\nu} - \partial_{\nu}\mathbb{A}_{\mu} + [\mathbb{A}_{\mu}, \mathbb{A}_{\nu}].  
\end{align}
\end{prop}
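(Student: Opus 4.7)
The plan is to compute $D_A^2$ by direct expansion of the local form \eqref{eq:dirac_full} and then read off $K^\mu$ and $L$ by grouping according to powers of $\partial$. Writing $\gamma^\mu := e^\mu_a \gamma^a$ and $\nabla^{\tot}_\mu := (\partial_\mu + \omega_\mu)\otimes \id + \id\otimes \mathbb{A}_\mu$, I have $D_A = i\gamma^\mu \nabla^{\tot}_\mu$. Applying Leibniz and using that $\mathbb{A}_\mu$ commutes with $\gamma^\nu$ (they live in different tensor factors), the square splits as
$D_A^2 = -\gamma^\mu\bigl(\partial_\mu\gamma^\nu + [\omega_\mu,\gamma^\nu]\bigr)\nabla^{\tot}_\nu - \gamma^\mu\gamma^\nu \nabla^{\tot}_\mu \nabla^{\tot}_\nu$.

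The first term is handled via the covariant constancy of the gamma matrices, $\partial_\mu \gamma^\nu + [\omega_\mu,\gamma^\nu] + \Gamma^\nu_{\mu\lambda}\gamma^\lambda = 0$, which rewrites it as $\Gamma^\nu_{\mu\lambda}\gamma^\mu\gamma^\lambda \nabla^{\tot}_\nu$. Because $\Gamma^\nu_{\mu\lambda}$ is symmetric in $\mu,\lambda$ while $\tfrac{1}{2}[\gamma^\mu,\gamma^\lambda]$ is antisymmetric, only the symmetric piece of $\gamma^\mu\gamma^\lambda = g^{\mu\lambda} + \tfrac{1}{2}[\gamma^\mu,\gamma^\lambda]$ contributes, yielding $\Gamma^\nu \nabla^{\tot}_\nu$ with $\Gamma^\nu := g^{\mu\lambda}\Gamma^\nu_{\mu\lambda}$. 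The same Clifford decomposition applied to the remaining term gives $\gamma^\mu\gamma^\nu \nabla^{\tot}_\mu\nabla^{\tot}_\nu = g^{\mu\nu}\nabla^{\tot}_\mu\nabla^{\tot}_\nu + \tfrac{1}{4}[\gamma^\mu,\gamma^\nu][\nabla^{\tot}_\mu,\nabla^{\tot}_\nu]$.

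The commutator $[\nabla^{\tot}_\mu,\nabla^{\tot}_\nu]$ decouples into the spin curvature $R^S_{\mu\nu}\otimes\id$ plus $\id\otimes \mathbb{F}_{\mu\nu}$, as $\omega_\mu$ and $\mathbb{A}_\mu$ act in different tensor slots. The standard Lichnerowicz identity $\tfrac{1}{4}[\gamma^\mu,\gamma^\nu]R^S_{\mu\nu} = \tfrac{1}{4}R$ produces the scalar curvature contribution, while antisymmetry of $\mathbb{F}_{\mu\nu}$ collapses the gauge part to $\tfrac{1}{2}\gamma^\mu\gamma^\nu\otimes \mathbb{F}_{\mu\nu}$. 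What remains is to expand $g^{\mu\nu}\nabla^{\tot}_\mu\nabla^{\tot}_\nu$ in partial derivatives and gather by order: the $\partial_\mu\partial_\nu$ part gives $g^{\mu\nu}\partial_\mu\partial_\nu$, the $\partial_\mu$ coefficient combined with the $\Gamma^\nu \nabla^{\tot}_\nu$ piece yields $K^\mu = (2\omega^\mu - \Gamma^\mu)\otimes \id + 2\,\id \otimes \mathbb{A}^\mu$, and the remaining zeroth-order terms, together with the $\tfrac{1}{4}R$ and $\mathbb{F}$ contributions, assemble into $L$.

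The main obstacle is the careful tracking of all cross terms arising when expanding $(\partial_\mu + \omega_\mu + \mathbb{A}_\mu)(\partial_\nu + \omega_\nu + \mathbb{A}_\nu)$ under the $g^{\mu\nu}$ contraction. Position dependence of $\omega$ and $\mathbb{A}$ generates $(\partial^\mu \omega_\mu)$, $(\partial^\mu \mathbb{A}_\mu)$, $\omega^\mu \omega_\mu$, and $\mathbb{A}^\mu \mathbb{A}_\mu$; since $\omega_\mu$ and $\mathbb{A}_\mu$ commute as operators, the mixed pieces combine to $2\,\omega^\mu\otimes \mathbb{A}_\mu$. The Christoffel contributions $-\Gamma^\mu \omega_\mu$ and $-\Gamma^\mu\otimes\mathbb{A}_\mu$ in $L$ then arise precisely from the $\Gamma^\nu\nabla^{\tot}_\nu$ term of the second step, and the Clifford and Lichnerowicz signs must be tracked with care to obtain exactly the coefficients in the statement.
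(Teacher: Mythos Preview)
Your argument is correct and is precisely the standard Lichnerowicz--Weitzenb\"ock computation one would expect here: split off the derivative of $\gamma^\nu$ via covariant constancy, symmetrize the Clifford product to isolate $g^{\mu\nu}\nabla^{\tot}_\mu\nabla^{\tot}_\nu$ and the curvature commutator, and then expand in powers of $\partial$. The paper itself does not supply a proof of this proposition---it is stated as a direct computational fact and immediately used to read off $\omega'_\mu$, $E$, and $\Omega_{\mu\nu}$---so there is nothing to compare against; your write-up fills in exactly the omitted calculation.
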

With this we can both determine $\omega'_{\mu}$ (and consequently $\Omega_{\mu\nu}$) and $E$ [cf.~\eqref{eq:defomega}, \eqref{eq:defOmega} and \eqref{eq:defE} respectively] uniquely:
\begin{align}
  \omega_{\mu}' = \omega_{\mu}\otimes\id + \id\otimes\mathbb{A}_{\mu} \label{eq:omegaaccent},\qquad
  E &= \tfrac{1}{4}R\otimes\id - \tfrac{1}{2}\gamma^\mu\gamma^\nu\otimes \mathbb{F}_{\mu\nu},\qquad
\Omega_{\mu\nu} = \tfrac{1}{4}R^{ab}_{\mu\nu}\gamma_{ab}\otimes\id + \id\otimes \mathbb{F}_{\mu\nu} \nonumber. 
\end{align}

We have shown that the fluctuated Dirac operator $D_A$ meets the demands needed to apply the heat kernel expansion of the spectral action, as sketched at the end of the previous section. Now for the first three coefficients appearing in \eqref{eq:expansion_action_functional} we have the following expressions: 
\begin{align}
   a_0(D_A^2) &= \frac{N^2}{4\pi^2}\int_M\volelt,\\
   a_2(D_A^2) &= \frac{N^2}{48\pi^2}\int_{M}R\volelt\\ 
   a_4(D_A^2) &= \frac{1}{16\pi^2}\frac{N^2}{360}\int_{M}\bigg[5R^2 - 8R_{\mu\nu}R^{\mu\nu} - 7R_{\mu\nu\rho\sigma}R^{\mu\nu\rho\sigma}\bigg]- \frac{1}{24\pi^2}\int_{M} \tr(\mathbb{F}_{\mu\nu}\mathbb{F}^{\mu\nu})\volelt\label{a4final}
\end{align}
where $N^2$ originates from $\tr_{M_N(\mathbb{C})}\id$. Inserting these expressions into \eqref{eq:expansion_action_functional} then results in
\begin{thm}[Chamseddine--Connes \cite{CC97}]
\label{thm:bos-action-EYM}
The bosonic action for the inner fluctuations of the spectral triple $(C^\infty(M, M_N(\C)), L^2(M, S)\otimes M_N(\C), \dirac \otimes 1)$ is given by
\begin{align*}
S_b[A]\equiv  \tr f(D_A/\Lambda) &= \frac{1}{4\pi^2}\int_{M}
\mathcal{L}_b(g, A)\volelt  + \mathcal{O}(\Lambda^{-2})
,
\end{align*}
with Lagrangian 
\begin{align*}
	\mathcal{L}_b(g, A) &= 2f_4\Lambda^4N^2 + \frac{N^2}{6}f_2\Lambda^2R + \frac{f(0)N^2}{1440}\bigg[5R^2 - 8R_{\mu\nu}R^{\mu\nu}-7R_{\mu\nu\rho\sigma}R^{\mu\nu\rho\sigma}\bigg]- \frac{f(0)}{6}\tr(\mathbb{F}_{\mu\nu}\mathbb{F}^{\mu\nu}).\nonumber
\end{align*}
\end{thm}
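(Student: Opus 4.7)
The plan is to verify the theorem by applying the heat kernel expansion in equation \eqref{eq:expansion_action_functional} and evaluating each Seeley--DeWitt coefficient $a_n(D_A^2)$ using the Gilkey formulas \eqref{eq:gilkey1}, \eqref{eq:gilkey2}, \eqref{eq:gilkey3}. By Proposition \ref{prop:squareD}, the fluctuated Dirac squared $D_A^2$ is indeed of the elliptic form \eqref{eq:elliptic}, and the associated $\omega'_\mu$, $E$ and $\Omega_{\mu\nu}$ have already been recorded in the text preceding the theorem. The remaining work is therefore to assemble the traces over $L^2(M,S)\otimes M_N(\C)$ cleanly; the bosonic action then follows from inserting the result into \eqref{eq:expansion_action_functional}.

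I would proceed in three steps, one per Seeley--DeWitt coefficient. For $a_0$, the trace of the identity on the $4$-dimensional spinor bundle tensored with $M_N(\C)$ gives $4N^2$, yielding $a_0(D_A^2) = (N^2/4\pi^2)\int_M\volelt$. For $a_2$, I note that $\tr E = \tr(\tfrac14 R\otimes\id - \tfrac12 \gamma^\mu\gamma^\nu\otimes \mathbb{F}_{\mu\nu})$; the second summand vanishes because $\gamma^\mu\gamma^\nu = g^{\mu\nu} + \gamma^{\mu\nu}$, and $g^{\mu\nu}\mathbb{F}_{\mu\nu} = 0$ by antisymmetry while $\tr\gamma^{\mu\nu} = 0$. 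One is left with $\tr(-R/6 + E) = (R/4 - R/6)\cdot 4N^2 = (2N^2/3)R$, which recovers the claimed $a_2$.

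The principal computation is $a_4$, where I need $\tr(E^2)$, $\tr(\Omega_{\mu\nu}\Omega^{\mu\nu})$ and $\tr(RE)$; the $R_{;\mu}^{\phantom{;\mu}\mu}$ and $E_{;\mu}^{\phantom{;\mu}\mu}$ terms drop by Stokes. Here the key identities are $\tr(\gamma^\mu\gamma^\nu\gamma^\rho\gamma^\sigma) = 4(g^{\mu\nu}g^{\rho\sigma} - g^{\mu\rho}g^{\nu\sigma} + g^{\mu\sigma}g^{\nu\rho})$ and $\tr_{M_N(\C)}(\mathbb{A}_\mu) = 0$ (so that $\mathbb{F}$ takes values in the traceless matrices), together with the standard identity $\tr(\gamma_{ab}\gamma_{cd}) = 4(g_{ad}g_{bc} - g_{ac}g_{bd})$ needed to convert $R^{ab}_{\mu\nu}R_{ab}^{\phantom{ab}\mu\nu}$ into $R_{\mu\nu\rho\sigma}R^{\mu\nu\rho\sigma}$. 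Cross terms between the $R$-part and the $\mathbb{F}$-part vanish by the antisymmetry of $\mathbb{F}_{\mu\nu}$ combined with $\tr \gamma^{\mu\nu} = 0$ and $\tr\gamma_{ab} = 0$. Collecting, the $E^2$ piece gives $\tfrac{N^2}{4}R^2 + 2\tr(\mathbb{F}_{\mu\nu}\mathbb{F}^{\mu\nu})$ up to factors, the $\Omega\Omega$ piece gives $-\tfrac{N^2}{2}R_{\mu\nu\rho\sigma}R^{\mu\nu\rho\sigma} + 4\tr(\mathbb{F}_{\mu\nu}\mathbb{F}^{\mu\nu})$, and $-60\,\tr(RE) = -60 N^2 R^2$.

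The main obstacle is purely bookkeeping: keeping signs, the $1/360$ factors, and the $\tr_{\text{spinor}}\otimes \tr_{M_N(\C)}$ splitting under control when simplifying $a_4$. After combining the contributions and using the Gauss--Bonnet-type rearrangement already reflected in \eqref{a4final}, the gauge field part collects into $-\tfrac{1}{24\pi^2}\int_M\tr(\mathbb{F}_{\mu\nu}\mathbb{F}^{\mu\nu})\volelt$ and the gravitational part into $(N^2/5760\pi^2)\int_M[5R^2 - 8R_{\mu\nu}R^{\mu\nu} - 7R_{\mu\nu\rho\sigma}R^{\mu\nu\rho\sigma}]\volelt$. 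Substituting $a_0,a_2,a_4$ into \eqref{eq:expansion_action_functional} and factoring out $1/4\pi^2$ yields the stated $\mathcal{L}_b(g,A)$, with the remainder $\mathcal{O}(\Lambda^{-2})$ absorbing the higher Seeley--DeWitt terms.
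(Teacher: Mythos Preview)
Your approach is exactly that of the paper: compute $a_0,a_2,a_4$ from the Gilkey formulas \eqref{eq:gilkey1}--\eqref{eq:gilkey3} using the $E$ and $\Omega_{\mu\nu}$ already displayed after Proposition~\ref{prop:squareD}, then substitute into the expansion \eqref{eq:expansion_action_functional}. One small arithmetic slip to fix: $R/4 - R/6 = R/12$, so $\tr(-R/6\cdot\id + E) = (R/12)\cdot 4N^2 = N^2R/3$ (not $2N^2R/3$), which then gives the stated $a_2(D_A^2)=\tfrac{N^2}{48\pi^2}\int_M R\,\volelt$.
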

This expression contains both the Einstein--Hilbert action of General Relativity and the Yang--Mills action of a $SU(N)$-gauge field. Since the term $\inpr{\psi}{D_A\psi}$ accounts for the fermionic propagator and interactions of the fermion $\psi$ with the gauge field, the sum
\begin{equation*}
S[A,\psi] := S_b[A]+ S_f[A,\psi] =  \tr f(D_A/\Lambda)+\inpr{\psi}{D_A\psi} 
\end{equation*}
gives the full action of the Einstein--Yang--Mills system plus terms of order $\Lambda^{-2}$. The gauge group $SU(\A) = C^\infty(M, SU(N))$ acts on the gauge potential $A$ and on $\psi$ in the adjoint representation.

\section{Supersymmetry in the Einstein--Yang--Mills system}\label{ch:susy_in_EYM}

We would like to obtain a realization of supersymmetry for the Einstein--Yang--Mills system, as considered in the previous section, in the framework of noncommutative geometry. The possibility of such a supersymmetry was suggested in \cite{CC97}. 

We work this out in full detail and give the supersymmetry transformation establishing this symmetry between the fermionic and bosonic fields.
\\

In trying to do so, we immediately stumble upon the problem that bosonic and fermionic fields do not have the same number of degrees of freedom, as is required for supersymmetry. Indeed, both in the spinorial as in the finite part the fermionic degrees of freedom exceed those of the bosons: by requiring self-adjointness and unimodularity, the finite part of the bosons was seen to be $\mathfrak{su}(N)$-valued one-forms. The finite part of the fermions, on the other hand, is an element of $M_N(\mathbb{C})$. On top of that, a spinor $\psi(x)$ has eight real (four complex) degrees of freedom whereas the continuous part of the gauge potential has only four: $A_\mu$, $\mu = 1, \ldots, 4$.\\

We will solve these two problems one by one in the subsequent subsections.

\subsection{Majorana and Weyl fermions}

The basic fermionic constituents of most supersymmetric theories are Majorana fermions; particles that are invariant under charge conjugation. However, in this Euclidean set up, we have $J^2 = -1$ with which only $\psi(x) = 0$ could be Majorana. 
Indeed (massless) Majorana fermions do not exist in a 4 dimensional Euclidean space, as was pointed out by Schwinger  \cite{Sch59} already in 1959.\\
An alternative way to correctly reduce the number of degrees of freedom is to restrict the input of the inner product to eigenspaces $\H^\pm$ of $\gamma$. To this end Chamseddine, Connes and Marcolli \cite{CCM07} propose  as a fermionic action
$\frac{1}{2}\inpr{J\psi}{D_A\psi}$ instead of $\inpr{\psi}{D_A\psi}$. This would be of no avail to us, since this allows for such a restriction only when $J\gamma = \gamma J$, in accordance with the classification of \cite{CC07a,CC07b}. In our case, it would automatically yield $\inpr{J\psi_1}{D\psi_2} = - \inpr{J\psi_1}{D\psi_2}$ for all $\psi_1, \psi_2 \in \H$.\\

Different but similar solutions of this problem were given by Van Nieuwenhuizen and Waldron \cite{NW96} and Nicolai \cite{Nic78}. To obtain a Lagrangian in Euclidean space, whose Green functions are analytic continuations of their Minkowskian counterpart, Van Nieuwenhuizen and Waldron propose the following. Starting from a Lagrangian for a single Weyl fermion, they define a Wick rotation on the spinors themselves. When applying this, one is obliged to drop the Minkowskian reality constraint $\overline{\psi} := \psi^\dagger \gamma^0$ ---rotating $\overline{\psi}$ and $\psi$ separately. The result is then a Lagrangian containing Weyl spinors $\chi$ and $\psi$ of opposite chirality. Since the system still contains two fermionic variables ($\chi$ and $\psi$ instead of $\overline{\psi}$ and $\psi$) the path integral is insensitive to such a rotation. The solution is thus to take as the fermionic part of the action
\begin{align}
  S_f[ A,\psi, \chi] := \inpr{\chi}{D_A\psi}; \qquad (\psi \in \H^+, \chi \in \H^-)\label{eq:fermion_inner_product2},
\end{align}
which is the Euclidean counterpart of the action for $\overline{\psi}$ and $\psi$ in Minkowskian space. 

\subsection{Unimodularity for fermions}

The reduction from $M_N(\mathbb{C})$ to $\mathfrak{su}(N)$ takes place in two steps; first from $M_N(\mathbb{C})$ to $\mathfrak{u}(N)$ and second from $\mathfrak{u}(N)$ to $\mathfrak{su}(N)$.\\

For the first part we simply use the fact that the $M_N(\mathbb{C})$ is the complexification of $\mathfrak{u}(N)$:
$
 M_N(\mathbb{C}) \simeq \mathbb{C}\otimes_{\mathbb{R}}\mathfrak{u}(N)\nonumber
$. For the full Hilbert space $\H$ this implies already that
\begin{align}
 \H = L^2(M, S)\otimes_{\mathbb{C}} M_N(\mathbb{C}) \simeq L^2(M, S)\otimes_{\mathbb{R}}\mathfrak{u}(N)\nonumber.
\end{align}

We obtain the reduction from $\mathfrak{u}(N)$ to $\mathfrak{su}(N)$ by splitting any fermion into a \emph{trace} and a \emph{traceless} part:
	 \begin{align*}
	   \widetilde{\psi} = \tr \widetilde{\psi} + \psi \in L^2(M, S)\otimes \big(\mathfrak{u}(1) \oplus \mathfrak{su}(N)\big).
	 \end{align*}
Inserting this expression into the inner product, we get
	 \begin{align*}
	   \inpr{\widetilde{\chi}}{D_A\widetilde{\psi}} &= \inpr{\tr\widetilde{\chi}}{D_A \tr\widetilde{\psi}} + \inpr{\chi}{D_A \tr\widetilde{\psi}}\nonumber
 + \inpr{\tr\widetilde{\chi}}{D_A\psi} + \inpr{\chi}{D_A\psi}
	   = \inpr{\tr \widetilde{\chi}}{D\tr\widetilde{\psi}} + \inpr{\chi}{D_A\psi},
	 \end{align*}
where we have used that for $\lambda \in \mathfrak{u}(1)$ and $X, X_1, X_2 \in \mathfrak{su}(N)$ that $[X, \lambda] = 0 = \tr(X \lambda )$. 
So the two different parts decouple and the trace-part lacks any gauge interactions; it describes a totally free fermion. We therefore discard it from the theory. 

\subsection{Supersymmetry transformations}\label{ch:seym}

After the preparations done in the two previous subsections, the Einstein--Yang--Mills system is at least suited for supersymmetry. What is left, is actually proving that the system is supersymmetric.

Thus, consider the action $S[A,\psi,\chi] = S_b[A] + S_f[A,\psi,\chi]$ in terms of the two traceless Weyl spinors $\psi$ and $\chi$ and the $SU(N)$-gauge field $\bA$. We conveniently write the fermionic action,
$$
S_f[A, \psi, \chi] = \langle \chi, D_A \psi \rangle = \int_M \tr_F ( \chi, D_A \psi ) \sqrt{g} d^4x,
$$
in terms of a Hermitian pairing \mbox{$(.\,,.):\Gamma^{\infty}(S) \times \Gamma^{\infty}(S) \to C^{\infty}(M)$} and a trace $\tr_F$ over the \emph{finite part}.\\

In order to see whether this system exhibits supersymmetry, we will define
\begin{align*}
  \delta \mathbb{A} \in B(\H)\quad \text{and}\quad
  \delta \psi \in \H^+, \delta \chi \in \H^-
\end{align*}
---where the expressions for $\delta\mathbb{A}, \delta\psi$ and $\delta\chi$ contain their respective \emph{superpart\-ners}--- under which the action is invariant:
\begin{align}
  \delta S[A,\psi, \chi] := \frac{d}{dt}S[ A + t\delta A, \psi + t\delta\psi, \chi + t\delta\chi]\bigg|_{t=0} = 0\label{eq:transf_susy}.
\end{align}

From here on $\epsilon_{\pm}$ will denote a pair of $\gamma^5$ eigenspinors that are singlets of the gauge group and vanish covariantly: $\nabla^S_\mu \epsilon_{\pm} = 0$.
\begin{defin}\label{def:susytransf} For $A \in \Omega^1_D(\A)$, $\psi, \in \H^+, \chi \in \H^-$ we define $\delta A \in \B(\H),\delta\psi \in \H^+, \delta\chi \in \H^-$ by
\begin{align*}
  \delta A &:= \gamma^\mu [c_1(\epsilon_-, \gamma_\mu\psi) + c_2(\chi, \gamma_\mu\epsilon_+)]
,\\
  \delta\psi &:= c_3F\epsilon_+ \qquad \text{and}\qquad
  \delta\chi := c_4F\epsilon_-
\end{align*}
where $F \equiv \gamma^\mu\gamma^\nu F_{\mu\nu}^a \otimes T_a$, $F_{\mu\nu} = \partial_\mu A_\nu - \partial_\nu A_\mu + [A_\mu,A_\nu]$ and $c_{1}\ldots c_4 \in \mathbb{R}$.
\end{defin}
The constants $c_{1}\ldots c_4$ are yet to be determined.

\begin{prop}\label{prop:susy_fermions}
With the definitions given above, we have for the fermionic part of the action
  \begin{align*}
    \delta S_f[A, \psi, \chi] = 
    -2 ic_4 \inpr{\epsilon_-}{F_{\mu\nu}\gamma^\mu D^{\nu}\psi} - 2ic_3\inpr{F_{\mu\nu}\gamma^\mu D^{\nu}\chi}{ \epsilon_+}
.
  \end{align*}
where $D_\mu = \nabla_\mu^S + A_\mu$ is the covariant derivative.
\end{prop}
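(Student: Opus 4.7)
The plan is to expand
\[
\delta S_f = \inpr{\delta\chi}{D_A\psi} + \inpr{\chi}{(\delta D_A)\psi} + \inpr{\chi}{D_A\delta\psi},
\]
where by \eqref{eq:fluctuations} one has $\delta D_A = i\gamma^\mu\ad(\delta A_\mu)$, and to show that the first and third pieces together produce the right-hand side of the claim while the middle (cubic-in-fermion) piece vanishes by a Fierz identity.

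For the contribution $c_3\inpr{\chi}{D_A F\epsilon_+}$ from $\delta\psi$, I first push $D_A$ through $F\epsilon_+$. Since $\nabla^S_\mu\epsilon_+=0$ and $\epsilon_+$ is a gauge singlet, $D_A$ effectively acts on the $\su(N)$-valued coefficient $F_{\mu\nu}$ via the adjoint covariant derivative $D_\mu=\nabla^S_\mu+\ad(A_\mu)$. The triple-$\gamma$ identity
\[
\gamma^\rho\gamma^\mu\gamma^\nu=\gamma^{\rho\mu\nu}+g^{\rho\mu}\gamma^\nu-g^{\rho\nu}\gamma^\mu+g^{\mu\nu}\gamma^\rho,
\]
combined with the antisymmetry of $F_{\mu\nu}$ (which kills the $g^{\mu\nu}$ term) and the Bianchi identity $D_{[\rho}F_{\mu\nu]}=0$ (which kills the totally-antisymmetric $\gamma^{\rho\mu\nu}$ term), collapses $D_A F\epsilon_+$ to $-2i\gamma^\mu(D^\nu F_{\mu\nu})\epsilon_+$. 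Integration by parts on the closed manifold $M$, exploiting the covariant constancy of $\gamma^\mu$ and of $\epsilon_+$, then moves the covariant derivative off $F$ onto $\chi$ and delivers $-2ic_3\inpr{F_{\mu\nu}\gamma^\mu D^\nu\chi}{\epsilon_+}$. The symmetric computation for $\delta\chi = c_4 F\epsilon_-$ yields $-2ic_4\inpr{\epsilon_-}{F_{\mu\nu}\gamma^\mu D^\nu\psi}$.

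To dispatch the middle term $i\inpr{\chi}{\gamma^\mu[\delta A_\mu,\psi]}$, I substitute $\delta A_\mu = c_1(\epsilon_-,\gamma_\mu\psi)+c_2(\chi,\gamma_\mu\epsilon_+)$ and decompose the fermions along a basis $\{T_a\}$ of $\su(N)$. Using cyclicity of $\tr_F$ to evaluate $\tr_F(T_b[T_a,T_c])$ in terms of the structure constants $f_{acb}$, the contribution collapses to sums of products of spinor bilinears of the schematic form $(\chi^b,\gamma^\mu\psi^a)(\epsilon_-,\gamma_\mu\psi^c)f_{acb}$. Treating the spinors as anticommuting (as is standard in a supersymmetric action) and applying the cyclic four-dimensional Fierz identity for two $\gamma^\mu\otimes\gamma_\mu$ bilinears, together with the total antisymmetry of $f_{abc}$, forces this contribution to vanish.

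The main obstacle is this Fierz step: several quartic-spinor bilinears must be rearranged, and one needs to simultaneously exploit antisymmetry of $f_{abc}$, the Grassmann sign flips, and the covariant constancy $\nabla^S\epsilon_\pm = 0$. Once this cancellation is verified, summing the two nonzero contributions gives precisely the claimed formula.
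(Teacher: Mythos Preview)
Your approach is essentially identical to the paper's: the same three-term decomposition of $\delta S_f$, the same reduction of $D_A F\epsilon_\pm$ via the triple-$\gamma$ identity combined with the antisymmetry of $F_{\mu\nu}$ and the Bianchi identity, and the same Fierz argument together with the total antisymmetry of $f_{abc}$ to kill the cubic-in-fermion middle term. The only cosmetic difference is that you make the final integration-by-parts step (moving $D^\nu$ from $F_{\mu\nu}$ onto $\psi$ or $\chi$) explicit, whereas the paper's displayed computation stops at the form $\inpr{\gamma^\nu D^\mu F_{\mu\nu}\epsilon_-}{\psi}$ and leaves that passage to the statement implicit.
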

\begin{proof}
  We apply the above supersymmetry transformations to the fermionic part of the action to obtain
\begin{align}
\delta S_f[A,\psi, \chi]
  &= \frac{d}{dt}\inpr{\chi + t\delta\chi}{D_{A + t\delta A}(\psi + t\delta\psi)}\Big|_{t = 0}
  = c_4\inpr{F\epsilon_-}{D_A\psi} + \inpr{\chi}{\delta\mathbb{A}\psi} + c_3\inpr{\chi}{D_AF\epsilon_+}  \label{eq:susy_fermion0}.
\end{align}
Let us look at the terms on the right hand side one by one. Writing out $F$, using the self-adjointness of $D_A$ and the identity 
\begin{align*}
  \nabla^S_\nu \gamma^\mu = - \Gamma^\mu_{\nu\alpha}\gamma^\alpha
\end{align*}
twice, we get for the first term
\begin{align}
  c_4\inpr{F\epsilon_-}{D_A\psi} &= - c_4i\inpr{\Gamma^\mu_{\sigma\alpha}\gamma^\sigma[\gamma^\alpha, \gamma^\nu]F_{\mu\nu}\epsilon_-}{\psi} + c_4i\inpr{\gamma^\sigma\gamma^\mu\gamma^\nu D_\sigma F_{\mu\nu}\epsilon_-}{\psi}\label{eq:susy_fermion0.5}.
\end{align}
Using the identity 
\begin{align}
  \gamma^\mu\gamma^\nu\gamma^\sigma = g^{\mu\nu}\gamma^\sigma - g^{\sigma\nu}\gamma^\mu + g^{\mu\sigma}\gamma^\nu - i\epsilon^{\mu\nu\sigma\lambda}\gamma^5\gamma_\lambda\nonumber
\end{align}
and the symmetry $\Gamma^\mu_{\nu\lambda} = \Gamma^\mu_{\lambda\nu}$ of the Christoffel symbols, the first term on the RHS of \eqref{eq:susy_fermion0.5} is seen to vanish whereas the second term now reads
\begin{align}
  (F\epsilon_-, D_A\psi) &= i([g^{\sigma\mu}\gamma^\nu - g^{\sigma\nu}\gamma^\mu +g^{\mu\nu}\gamma^\sigma -i\epsilon^{\sigma\mu\nu\lambda}\gamma^5\gamma_\lambda]D_\sigma F_{\mu\nu}\epsilon_-, \psi) \label{eq:susy_fermion1}. 
\end{align}
The first two terms of \eqref{eq:susy_fermion1} add up by the antisymmetry of $F_{\mu\nu}$, whereas the third term vanishes for that very reason. The fourth term vanishes in view of the Bianchi identity:
\begin{align}
  [D_\mu F_{\nu\sigma} + D_\sigma F_{\mu\nu} + D_\nu F_{\sigma\mu}](x) = 0\quad\forall\ x \in M\nonumber. 
\end{align}
We are thus left with:
\begin{align}
  c_4\inpr{F\epsilon_-}{D_A\psi} = 2c_4i(\gamma^\nu D^\mu F_{\mu\nu}\epsilon_-, \psi)\nonumber.
\end{align}
By exactly the same reasoning we can rewrite the third term of \eqref{eq:susy_fermion0}. Now we are still left with the second term of \eqref{eq:susy_fermion0}, which yields for each point $x \in M$:
\begin{align}
  \tr_F(\chi, \delta\mathbb{A}\psi)(x) &= f_{abc}(\chi^a, \gamma^\mu\psi^c)[c_1(\epsilon_-, \gamma_\mu\psi) + c_2(\chi, \gamma_\mu\epsilon_+)](x).
\end{align}
Both terms are seen to vanish separately using the antisymmetry of $f_{abc}$ and a Fierz transformation (see Appendix \ref{se:fiers_trans} for details). Adding the results for the first and third terms of \eqref{eq:susy_fermion0} yields the expression:
\begin{align*}
  \delta S_f[A,\psi, \chi] &= 2c_4i\inpr{\gamma^\nu D^\mu F_{\mu\nu}\epsilon_-}{\psi} + 2c_3i\inpr{\chi}{ \gamma^\nu D^\mu F_{\mu\nu}\epsilon_+}
\qedhere
\end{align*}
\end{proof}

That covered the fermionic part of the action. Regarding the bosonic part we can see that after performing the supersymmetry transformations 
\begin{prop}
  The square of the operator $D_{A + t\delta A}$ with $\delta A$ given in Definition \ref{def:susytransf} is of the form in \eqref{eq:elliptic}:
  \begin{align*}
    D_{A + t\delta A}^2 &= -[g_{\mu\nu}\partial^\mu\partial^\nu + {K'}^\mu\partial_\mu + L']
  \end{align*}
with ${K'}^\mu$ and $L'$ given in terms of the $K^\mu$ and $L$ of Proposition \ref{prop:squareD} as
\begin{align*}
{K'}^\mu &= K^\mu - 2c_1\otimes\ad(\epsilon_-, \gamma^\mu\psi)t,\nonumber\\
L' &= L - c_1\gamma^\nu\gamma^\mu\otimes\ad(\epsilon_-, \gamma_\mu D_\nu\psi)t - 2c_1\otimes\ad(\epsilon_-, \gamma^\nu\psi)[(\omega_\nu - \Gamma_\nu)\otimes 1 + \mathbb{A}_\nu t] + \mathcal{O}(t^2),
\end{align*}
\end{prop}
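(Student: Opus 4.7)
The strategy is that $\delta A$ is manifestly an $\mathfrak{su}(N)$-valued one-form (the pairings $(\epsilon_-,\gamma_\mu\psi)$ and $(\chi,\gamma_\mu\epsilon_+)$ produce a scalar in the spinor slot while leaving the finite $\mathfrak{su}(N)$-entries of $\psi,\chi$ untouched), so $A+t\delta A$ is again an admissible gauge potential. Consequently $D_{A+t\delta A}$ has exactly the form \eqref{eq:dirac_full} with $\mathbb{A}_\mu$ replaced by $\mathbb{A}_\mu+t\,\delta\mathbb{A}_\mu$, where $\delta\mathbb{A}_\mu:=-i\,\ad(\delta A_\mu)$. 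Proposition~\ref{prop:squareD} then applies \emph{verbatim} to $D_{A+t\delta A}^{\,2}$, and the whole proof reduces to substituting $\mathbb{A}_\mu\mapsto\mathbb{A}_\mu+t\,\delta\mathbb{A}_\mu$ into the displayed formulas for $K^\mu$ and $L$, and collecting powers of $t$.

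For $K'^{\mu}$ only the piece $2\,\id\otimes\mathbb{A}^\mu$ is affected, producing the single correction $2t\,\id\otimes\delta\mathbb{A}^\mu$, which after inserting the explicit $\delta A_\mu$ yields the stated $-2c_1\otimes\ad(\epsilon_-,\gamma^\mu\psi)\,t$ contribution (with the $c_2$-piece handled symmetrically). For $L'$ four terms in $L$ vary:
\begin{itemize}
\item $\id\otimes\partial^\mu\mathbb{A}_\mu$ contributes $t\,\id\otimes\partial^\mu\delta\mathbb{A}_\mu$;
\item $\id\otimes\mathbb{A}^\mu\mathbb{A}_\mu$ contributes $t\,\id\otimes\{\mathbb{A}^\mu,\delta\mathbb{A}_\mu\}$, modulo $\mathcal{O}(t^2)$;
\item $(2\omega^\mu-\Gamma^\mu)\otimes\mathbb{A}_\mu$ contributes $t(2\omega^\mu-\Gamma^\mu)\otimes\delta\mathbb{A}_\mu$;
\item $-\tfrac12\gamma^\mu\gamma^\nu\otimes\mathbb{F}_{\mu\nu}$ contributes $-\tfrac{t}{2}\gamma^\mu\gamma^\nu\otimes\delta\mathbb{F}_{\mu\nu}$, with
$\delta\mathbb{F}_{\mu\nu}=D_\mu(\delta\mathbb{A}_\nu)-D_\nu(\delta\mathbb{A}_\mu)$ up to $\mathcal{O}(t)$.
\end{itemize}
The three ``non-curvature'' pieces all carry the common factor $\ad(\epsilon_-,\gamma^\nu\psi)$ after inserting $\delta\mathbb{A}$, and they assemble into the bracket $[(\omega_\nu-\Gamma_\nu)\otimes 1+\mathbb{A}_\nu]$ displayed in the proposition (the $\partial^\mu\delta\mathbb{A}_\mu$ piece being absorbed once one recognises that the partial derivative, combined with $\omega$ and $\Gamma$, reconstructs a covariant derivative that annihilates $\epsilon_-$).

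The curvature variation is the only piece that produces a new Dirac-algebra structure. Here the plan is to use covariant constancy $\nabla^S_\mu\epsilon_-=0$ to pull $\epsilon_-$ through $D_\nu$, so that $D_\nu\delta\mathbb{A}_\mu=-i\,\ad\!\bigl(c_1(\epsilon_-,\gamma_\mu D_\nu\psi)+c_2(D_\nu\chi,\gamma_\mu\epsilon_+)\bigr)$, and then antisymmetrise in $(\mu\nu)$. Contracting with $\gamma^\mu\gamma^\nu$ and using $\gamma^\mu\gamma^\nu=-\gamma^\nu\gamma^\mu$ on the antisymmetric part gives the stated $-c_1\gamma^\nu\gamma^\mu\otimes\ad(\epsilon_-,\gamma_\mu D_\nu\psi)\,t$. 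All quadratic remainders ($\delta\mathbb{A}^\mu\delta\mathbb{A}_\mu$ and $[\delta\mathbb{A}_\mu,\delta\mathbb{A}_\nu]$ in $\mathbb{F}$) are absorbed into the $\mathcal{O}(t^2)$.

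The main obstacle I expect is bookkeeping in the curvature term: one has to keep track of which factor of $\gamma$ sits on which side of $\delta\mathbb{A}$, verify that the Christoffel symbols from $\nabla^S_\nu\gamma^\mu=-\Gamma^\mu_{\nu\alpha}\gamma^\alpha$ (used implicitly when passing from $\partial$ to $D$ on $\delta\mathbb{A}$) recombine correctly with the $-\Gamma_\nu$ already present in the bracket, and check that the $\ad$-action on the pairing is consistent with the conventions in which $\delta\mathbb{A}$ was defined. Everything else is a straightforward substitution into Proposition~\ref{prop:squareD}.
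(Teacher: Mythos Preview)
Your approach is valid but genuinely different from the paper's. The paper does \emph{not} substitute $\mathbb{A}_\mu\mapsto\mathbb{A}_\mu+t\,\delta\mathbb{A}_\mu$ into Proposition~\ref{prop:squareD}; instead it expands $D_{A+t\delta A}^2=D_A^2+i\{D_A,\delta\mathbb{A}\}\,t+\mathcal{O}(t^2)$ and computes the anticommutator $\{D_A,\delta\mathbb{A}\}$ directly as a Clifford-algebra manipulation, using $\nabla^S_\nu\gamma^\mu=-\Gamma^\mu_{\nu\alpha}\gamma^\alpha$ and $\nabla^S_\mu\epsilon_-=0$ along the way, then reads off ${K'}^\mu$ (from the $\partial$-piece) and $L'$ (from the rest). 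Your route---quoting Proposition~\ref{prop:squareD} verbatim for the shifted potential---is conceptually cleaner and avoids redoing the square, but it trades one computation for another: you must recombine the four variations of $L$ back into the two displayed correction terms.

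There is one inaccuracy in your term-matching that you should fix. The curvature variation $-\tfrac{t}{2}\gamma^\mu\gamma^\nu\otimes\delta\mathbb{F}_{\mu\nu}$ sees only the \emph{antisymmetric} part of $\gamma^\mu\gamma^\nu$, since $\delta\mathbb{F}_{\mu\nu}$ is antisymmetric. But the paper's first correction term $-c_1\gamma^\nu\gamma^\mu\otimes\ad(\epsilon_-,\gamma_\mu D_\nu\psi)$ carries the full product $\gamma^\nu\gamma^\mu$, whose symmetric piece is $g^{\mu\nu}\ad(\epsilon_-,\gamma_\mu D_\nu\psi)=\ad(\epsilon_-,\gamma^\nu D_\nu\psi)$. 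This scalar piece cannot come from your curvature item; it is supplied by the ``non-curvature'' items $\partial^\nu\delta\mathbb{A}_\nu$ and $\{\mathbb{A}^\nu,\delta\mathbb{A}_\nu\}$, which together contain exactly $\ad(\epsilon_-,\gamma^\nu D_\nu\psi)$ once you apply the identity $\partial_\mu(\epsilon_-,\gamma_\nu\psi)=(\epsilon_-,\gamma_\nu\nabla^S_\mu\psi)-\Gamma^\lambda_{\mu\nu}(\epsilon_-,\gamma_\lambda\psi)$. So your clean split ``curvature $\to$ first term, non-curvature $\to$ bracket'' is not correct as stated: there is cross-talk, and the non-curvature pieces contribute both to the $\gamma^\nu\gamma^\mu$ term (its trace part) and to the bracket. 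The paper's anticommutator route sidesteps this reshuffling because the product $\gamma^\nu\gamma^\mu$ appears already assembled in $D_A\cdot\delta\mathbb{A}$. Once you account for this redistribution (and for the Christoffel term $\Gamma^\lambda_{\mu\nu}$, which cancels against the antisymmetric contraction in the curvature piece and feeds the $-\Gamma_\nu$ in the bracket on the scalar side), your computation closes.
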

\begin{proof}
We will explicitly calculate 
\begin{align}
  D_{A + t\delta A}^2 &= [D + A + t\delta A + J(A + t\delta A)J^*]^2= D_A^2+ i\{D_A, \delta \mathbb{A}\}t + \cO(t^2)\label{eq:DiracSquared},
\end{align}
where $\delta A + J\delta AJ = i\delta\mathbb{A}$. Let us for the moment ignore the term in $\delta A$ depending on $\chi$.

Then, the second term on the right hand side of \eqref{eq:DiracSquared} reads 
\begin{align}
  \{D_A, \delta \mathbb{A}\} &= [i\gamma^\nu(\nabla^S_\nu \otimes \id + \mathbb{A}_\nu)][c_1\gamma^\mu \otimes \ad(\epsilon_-, \gamma_\mu\psi)]+ [ic_1\gamma^\mu\otimes\ad(\epsilon_-, \gamma_\mu\psi)][\gamma^\nu (\nabla^S_\nu\otimes \id + \mathbb{A}_\nu)]\nonumber\\
    &= ic_1\gamma^\mu\gamma^\nu(\partial_\mu\otimes 1 + \mathbb{A}_\mu)\ad(\epsilon_-, \gamma_\nu\psi) - ic_1\gamma^\mu\gamma^\nu\ad(\epsilon_-, \gamma_\nu\psi)(\partial_\mu\otimes 1 + \mathbb{A}_\mu)\nonumber\\
    &\hspace{3cm} + 2ic_1\ad(\epsilon_-, \gamma^\mu\psi)(\nabla^S_\mu + \mathbb{A}_\mu) - ic_1\Gamma^\nu_{\mu\lambda}\gamma^\mu\gamma^\lambda\ad(\epsilon_-, \gamma_\nu\psi)\label{eq:DiracSquared2}, 
\end{align}
where the Christoffel symbol stems from interchanging the spin connection $\nabla^S_\mu$ with a gamma matrix.
Using that $\epsilon_-$ vanishes covariantly we have
\begin{align*}
  \partial_\mu(\epsilon_-, \gamma_\nu\psi) = (\epsilon_-, \gamma_\nu\nabla^S_\mu\psi) - \Gamma^{\lambda}_{\mu\nu}(\epsilon_-, \gamma_\lambda\psi) + (\epsilon_-, \gamma_\nu\psi)\partial_\mu.
\end{align*}
Inserting this expression into \eqref{eq:DiracSquared2} and using the definition $\Gamma^\mu = g^{\nu\lambda}\Gamma^\mu_{\nu\lambda}$, we receive for $\{D_A, \delta\mathbb{A}\}$:
\begin{align*}
  \{D_A, \delta\mathbb{A}\} &= ic_1\gamma^\mu\gamma^\nu\ad(\epsilon_-, \gamma_\nu D_\mu\psi) + 2ic_1\ad(\epsilon_-, \gamma^\nu\psi)[(\partial_\mu + \omega_\mu - \Gamma_\mu)\otimes 1 + \mathbb{A}_\mu].
\end{align*}
Plugging this into \eqref{eq:DiracSquared} yields the desired form of ${K'}^\mu$ and $L'$. 
\end{proof}

We are thus allowed to perform a heat kernel expansion \eqref{eq:gilkey} for $D_{A + t\delta A}$ to see to what extent each of the coefficients $a_n(D_A^2)$ (for $n = 0, 2, 4$) is invariant under supersymmetry. \\

The objects that are of interest to us are  
\begin{align}
\delta a_n (D_A^2) := \frac{\mathrm{d}}{\mathrm
{d}t} a_n(D_{A+t\delta A}^2)\Big|_{t=0} &= \frac{d}{dt} a_n(D_{A}^2+\{\delta \mathbb{A}, D_A\}t + \mathcal{O}(t^2))\Big|_{t=0}\label{eq:susy_a_n}, 
\end{align}
the first of which are given by \eqref{eq:gilkey1}, \eqref{eq:gilkey2} and \eqref{eq:gilkey3}. The $E$ and $\Omega_{\mu\nu}$ appearing in these formulas are of course different than before, but still related to $K^\mu$ and $L$ (as given above) in the same way; by \eqref{eq:defE} and \eqref{eq:defomega}. 
Short calculations show that the changes of $K^\mu$ to ${K'}^\mu$ and $L$ to $L'$ have the following effect on the variable $E$ and $\Omega_{\mu\nu}$:
\begin{align}
E' &= E + c_1\gamma^\mu\gamma^\nu\otimes\ad(\epsilon_-,\gamma_\nu D_\mu\psi)t - c_1\id\otimes\ad(\epsilon_-, \gamma^\mu D_\mu\psi)t + \mathcal{O}(t^2)\nonumber\\
\Omega_{\mu\nu}' &= \Omega_{\mu\nu} + c_1\id\otimes[\ad(\epsilon_-, \gamma_{\nu}D_{\mu}\psi) - c_1\ad(\epsilon_-, \gamma_{\mu}D_{\nu}\psi)]t + \mathcal{O}(t^2)\nonumber. 
\end{align}
Having found these particular expressions, we are ready to determine \eqref{eq:susy_a_n}.
\begin{prop}
The Seeley--DeWitt coefficients $a_0(D_A^2)$ and $a_2(D_A^2)$ are invariant under the supersymmetry transformation $A \mapsto A+t \delta A$ given in Definition \ref{def:susytransf}, whereas $a_4(D_A)$ transforms as
  \begin{align*}
    \frac{d}{dt} a_4(D_{A+t\delta A}^2)\Big|_{t=0} &= \frac{c_1}{6\pi^2}\inpr{\epsilon_-}{F^{\mu\nu}\gamma_\nu D_\mu\psi} + \frac{c_2}{6\pi^2}\inpr{F^{\mu\nu} \gamma_\nu D_\mu \chi}{\epsilon_+} ,  
  \end{align*}
\end{prop}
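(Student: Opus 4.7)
The plan is to exploit the explicit Seeley--DeWitt formulas \eqref{eq:gilkey1}--\eqref{eq:gilkey3} together with the variations $\delta E$ and $\delta \Omega_{\mu\nu}$ computed just above, and to use two structural facts repeatedly: first, that $\tr_F \circ \ad = 0$ on the finite part, so any finite trace of $\ad(\epsilon_-,\gamma_\nu D_\mu \psi)$ by itself vanishes; and second, that the spinor trace of an odd product of gamma matrices vanishes, while $\tr(\gamma^\alpha\gamma^\beta)=4g^{\alpha\beta}$ and $\tr(\gamma^\alpha\gamma^\beta\gamma^\mu\gamma^\nu)=4(g^{\alpha\beta}g^{\mu\nu}-g^{\alpha\mu}g^{\beta\nu}+g^{\alpha\nu}g^{\beta\mu})$.

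For $a_0$ the result is immediate, since $a_0(D_A^2)$ depends only on $\tr(\id)$ and is therefore unaffected by $A$. For $a_2$ one has to show $\int_M \tr(\delta E)\volelt = 0$. Reading off $\delta E$ from the preceding display, both contributions are of the form $(\text{spinor matrix})\otimes \ad(\epsilon_-,\gamma_{\,\cdot}D_{\,\cdot}\psi)$, and the $\tr_F\circ\ad = 0$ argument kills each one independently; no cancellation between the two pieces is needed.

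The real work is in $a_4$. All terms built only from the Riemann tensor are $A$-independent and therefore invariant, and $\tr(R E)$ varies as $R\,\tr(\delta E)$ which vanishes by the $a_2$ argument; only $\tr(E^2)$ and $\tr(\Omega_{\mu\nu}\Omega^{\mu\nu})$ actually contribute. I would compute $2\tr(E\,\delta E)$ by expanding $E=\tfrac14 R\otimes\id-\tfrac12\gamma^\alpha\gamma^\beta\otimes\mathbb{F}_{\alpha\beta}$: every term containing the $\tfrac14 R\otimes\id$ factor of $E$ or the $\id\otimes\ad(\epsilon_-,\gamma^\mu D_\mu\psi)$ piece of $\delta E$ is killed by $\tr_F\circ\ad=0$, and the term with spinor factor $\gamma^\alpha\gamma^\beta\otimes\mathbb{F}_{\alpha\beta}$ contracted with $\id\otimes\ad(\ldots)$ is killed by the antisymmetry of $\mathbb{F}$ against $\tr(\gamma^\alpha\gamma^\beta)=4g^{\alpha\beta}$. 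What remains is a single four-gamma trace contracted with $\mathbb{F}_{\alpha\beta}\ad(\epsilon_-,\gamma_\nu D_\mu\psi)$, and antisymmetry of $\mathbb{F}$ drops the $g^{\alpha\beta}g^{\mu\nu}$ piece and converts the two remaining pieces into a multiple of $\tr_F(\mathbb{F}^{\mu\nu}\ad(\epsilon_-,\gamma_\nu D_\mu\psi))$. The $\Omega$-calculation is analogous but easier: in $2\tr(\Omega^{\mu\nu}\delta\Omega_{\mu\nu})$ the piece of $\Omega$ carrying the Riemann tensor vanishes against $\id\otimes \ad(\ldots)$ again by $\tr_F\circ\ad=0$, and the surviving $\id\otimes\mathbb{F}^{\mu\nu}$ piece, paired with $\id\otimes[\ad(\epsilon_-,\gamma_\nu D_\mu\psi)-\ad(\epsilon_-,\gamma_\mu D_\nu\psi)]$ and using $\mathbb{F}^{\mu\nu}=-\mathbb{F}^{\nu\mu}$, reduces to the same invariant $\tr_F(\mathbb{F}^{\mu\nu}\ad(\epsilon_-,\gamma_\nu D_\mu\psi))$.

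Finally I would collect these two contributions with the weights $180/360$ and $30/360$ coming from \eqref{eq:gilkey3} and the overall $1/(16\pi^2)$, and translate $\tr_F(\mathbb{F}^{\mu\nu}\ad\,\cdot\,)$ back to the $\mathfrak{su}(N)$-Hilbert inner product $\inpr{\epsilon_-}{F^{\mu\nu}\gamma_\nu D_\mu\psi}$ via the relation $\mathbb{F}_{\mu\nu}=-i\ad F_{\mu\nu}$ and the proportionality of $\tr_F(\ad T_a\,\ad T_b)$ to $\tr(T_aT_b)$ for $\mathfrak{su}(N)$; this produces the stated coefficient $c_1/(6\pi^2)$. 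The $c_2$ contribution is obtained by repeating the argument verbatim with the roles of $\psi\in\H^+$ and $\chi\in\H^-$ interchanged, i.e.\ using the $(\chi,\gamma_\mu\epsilon_+)$ piece of $\delta A$. The main obstacle I anticipate is the bookkeeping in step for $\tr(E^2)$: one must be systematic about which of the four cross-terms survive the $\tr_F\circ\ad=0$ and antisymmetry filters, and track the factors of $2$ so that the combination $180\tr(E^2)+30\tr(\Omega^2)$ assembles into exactly $c_1/(6\pi^2)$ with no stray $R$- or $\gamma^5$-dependent remnants.
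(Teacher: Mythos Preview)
Your proposal is correct and follows essentially the same route as the paper: identify that only $\tr(E^2)$ and $\tr(\Omega_{\mu\nu}\Omega^{\mu\nu})$ contribute to $\delta a_4$, compute each via the gamma-trace identities and the antisymmetry of $\mathbb{F}$, and combine with the Seeley--DeWitt weights. One cosmetic difference: for $a_2$ you kill each piece of $\tr(\delta E)$ separately using $\tr_F\circ\ad=0$, whereas the paper instead takes the spinor trace first so that the $\gamma^\mu\gamma^\nu\otimes\ad(\epsilon_-,\gamma_\nu D_\mu\psi)$ and $\id\otimes\ad(\epsilon_-,\gamma^\mu D_\mu\psi)$ terms cancel against each other---both arguments are valid, and yours is arguably cleaner since it also dispatches the $R\,\tr(\delta E)$ and Riemann--$\Omega$ cross-terms without any further thought.
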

\begin{proof}
  The first coefficient $a_0(D_{A+t\delta A}^2)$ is trivial: the identity does not transform under supersymmetry. Ignoring for the moment the term linear in $\chi$ for the second Seeley--DeWitt coefficient, there is only one contribution [see \eqref{eq:gilkey2}];
\begin{align}
 \frac{d}{dt} \tr (E')\bigg|_{t=0} &= c_1i\tr[\gamma^\mu\gamma^\nu\otimes\ad(\epsilon_-,\gamma_\nu D_\mu\psi) - \id\otimes\ad(\epsilon_-, \gamma^\mu D_\mu \psi)]= 0.
\end{align}
For the third coefficient \eqref{eq:gilkey3} there are three terms of interest 
\begin{align}
 \delta a_4(D_A^2) &= \frac{1}{192\pi^2}\frac{d}{dt}\int_M\tr\big(6{E'}^2 + {\Omega'}_{\mu\nu}{\Omega'}^{\mu\nu} + 2RE'\big)\sqrt{g}d^4x\bigg|_{t=0}\label{eq:susya4},
\end{align}
where the last one vanishes by the same reasoning as employed above. For the first term we use that $E = \tfrac{1}{4}R\otimes 1 -\tfrac{1}{2}\gamma^\mu\gamma^\nu\otimes\mathbb{F}_{\mu\nu}$ and obtain
 \begin{align}
    \frac{d}{dt}\int_M\tr{E'}^2\sqrt{g}d^4x\bigg|_{t=0} &= -2\frac{1}{2}c_1\int_M\bigg[\tr\big[\gamma^\lambda\gamma^\sigma\gamma^\mu\gamma^\nu\otimes\mathbb{F}_{\lambda\sigma}\ad(\epsilon_-,\gamma_\nu D_\mu\psi)\nonumber\\
    &\qquad - \gamma^\mu\gamma^\nu\otimes\mathbb{F}_{\mu\nu}\ad(\epsilon_-, \gamma^\lambda D_\lambda\psi)\big]\bigg]\sqrt{g}d^4x\nonumber\\
  &= -4c_1\big(\delta^{\lambda\nu}\delta^{\sigma\mu} - \delta^{\lambda\mu}\delta^{\nu\sigma}\big)
\times\int_M\tr[\mathbb{F}_{\lambda\sigma}\ad(\epsilon_-,\gamma_\nu D_\mu\psi)]\sqrt{g}d^4x\nonumber\\
   &= 8c_1\int_M\tr\big[\mathbb{F}^{\mu\nu}\ad(\epsilon_-,\gamma_\nu D_\mu\psi)]\sqrt{g}d^4x
= -8Nc_1\inpr{\epsilon_-}{F^{\mu\nu}\gamma_\mu D_\nu\psi}\nonumber,
 \end{align}
where at various points we have used that $\mathbb{F}$ is antisymmetric. For the second term in \eqref{eq:susya4} we have with $\Omega_{\mu\nu} = 1\otimes\mathbb{F}_{\mu\nu} + \tfrac{1}{4}R^{ab}_{\mu\nu}\gamma_{ab}\otimes 1$
\begin{align}
\frac{d}{dt}\int_M\tr{\Omega'}_{\mu\nu}^2\sqrt{g}d^4x\bigg|_{t=0} &= 2c_1\int_M\tr\big[ 1\otimes\mathbb{F}_{\mu\nu}+ \tfrac{1}{4}R^{ab}_{\mu\nu}\gamma_{ab}\otimes 1][\id\otimes\ad(\epsilon_-, \gamma_{[\nu}D_{\mu]}\psi)]\big]\sqrt{g}d^4x\nonumber\\
&= 16c_1\int_M\tr\big[\mathbb{F}_{\mu\nu}\ad(\epsilon_-, \gamma^{\nu}D^{\mu}\psi)\big]\sqrt{g}d^4x= -16Nc_1\inpr{\epsilon_-}{F_{\mu\nu}\gamma^\mu D^\nu\psi}\nonumber.
\end{align}
We thus get for \eqref{eq:susya4}:
  \begin{align*}
    \delta a_4(D_A^2) &= -\frac{c_1}{192\pi^2}(48 + 16)N\inpr{\epsilon_-}{F^{\mu\nu}\gamma_\mu D_\nu\psi}
= -\frac{c_1}{3\pi^2}\inpr{\epsilon_-}{F^{\mu\nu}\gamma_\mu D_\nu\psi}
  \end{align*}
and a similar term involving $\chi$, as required.
\end{proof}

Thus, although $a_0(D^2_A)$ (proportional to $\Lambda^4$) and $a_2(D_A^2)$ (proportional to $\Lambda^2$) are supersymmetry invariants, $a_4(D_A^2)$ transforms to an expression that equals the one of the fermionic action (cf. Proposition \ref{prop:susy_fermions}) by the right choice of coefficients. This means that
\begin{thm} 
The action $S[A, \psi,\chi]= S_b[A] + S_f[A,\psi,\chi]$ (with $S_f$ defined in \eqref{eq:fermion_inner_product2}) of the Einstein--Yang--Mills system is invariant under supersymmetry for at least all positive powers of $\Lambda$, provided
$$
c_4 = - \frac{i c_1 f(0)}{6 \pi^2} \qquad \text{and} \qquad  c_3 = - \frac{i c_2 f(0)}{6 \pi^2}.
$$
\end{thm}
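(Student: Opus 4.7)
The plan is to assemble the two variational formulas already established — the fermionic variation from Proposition \ref{prop:susy_fermions} and the variation of the Seeley--DeWitt coefficients from the previous proposition — and then fix $c_3,c_4$ so that the total variation $\delta S = \delta S_b + \delta S_f$ vanishes order by order in the cut-off $\Lambda$.

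First, I would apply the heat kernel expansion \eqref{eq:expansion_action_functional} to $S_b[A+t\delta A] = \tr f(D_{A+t\delta A}/\Lambda)$ and differentiate at $t=0$. Since the previous proposition gives $\delta a_0(D_A^2) = \delta a_2(D_A^2) = 0$, the contributions at orders $\Lambda^4$ and $\Lambda^2$ drop out automatically, and
\[
\delta S_b[A] = f(0)\,\delta a_4(D_A^2) + \mathcal{O}(\Lambda^{-2}).
\]
Substituting the expression for $\delta a_4$ then writes $\delta S_b$ as a linear combination of the two scalar pairings $\inpr{\epsilon_-}{F^{\mu\nu}\gamma_\nu D_\mu\psi}$ and $\inpr{F^{\mu\nu}\gamma_\nu D_\mu\chi}{\epsilon_+}$, with coefficients proportional to $c_1 f(0)$ and $c_2 f(0)$ respectively.

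Next, I would compare this with $\delta S_f$ from Proposition \ref{prop:susy_fermions}. The two expressions feature essentially the same pairings modulo a discrepancy in the index placement on the curvature and the Dirac structure. Using the antisymmetry $F^{\mu\nu}=-F^{\nu\mu}$ together with raising and lowering by the metric, one obtains $F^{\mu\nu}\gamma_\nu D_\mu = -F_{\mu\nu}\gamma^\mu D^\nu$, which converts the $\delta S_b$ terms into exactly the form appearing in $\delta S_f$. At this stage, $\delta S$ at order $\Lambda^0$ splits into a $\psi$-pairing whose coefficient depends only on $c_1, c_4, f(0)$ and a $\chi$-pairing whose coefficient depends only on $c_2, c_3, f(0)$, which are independent as bilinear functionals of the independent Weyl spinors.

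Finally, imposing the vanishing of each coefficient yields one linear equation relating $c_4$ to $c_1$ and one relating $c_3$ to $c_2$; their solutions are precisely the relations asserted in the theorem. The main obstacle in the argument is bookkeeping rather than mathematics: one must track carefully the factors of $\ii$ coming from the Dirac operator, the signs produced by the antisymmetry of $F_{\mu\nu}$ and by moving derivatives under $\inpr{\cdot}{\cdot}$, and the factor $f(0)$ that arises as the weight of $a_4$ in the heat kernel expansion. No further analytic or algebraic input is needed, since both propositions have already reduced the problem to matching coefficients in two independent bilinear pairings.
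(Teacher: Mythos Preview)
Your proposal is correct and matches the paper's approach exactly: the paper does not give a formal proof of this theorem but simply states it as the immediate consequence of combining Proposition \ref{prop:susy_fermions} with the preceding proposition on $\delta a_n$, noting that $a_0$ and $a_2$ are invariant while the $f(0)\,\delta a_4$ contribution cancels $\delta S_f$ for the stated choice of constants. Your write-up is in fact more explicit than the paper's, which leaves the index juggling via $F^{\mu\nu}\gamma_\nu D_\mu = -F_{\mu\nu}\gamma^\mu D^\nu$ and the final coefficient matching entirely to the reader.
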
 

Though these results are encouraging, it is still somewhat unsatisfactory that we had to resort to a heat kernel expansion; a question whether or not the \emph{full} spectral action is supersymmetry invariant remains to be answered. As was noted by Chamseddine in \cite{Cha98}, noncommutative geometry treats bosons (spectral action) and fermions (inner product) on a different footing. Hence, any attempt (such as \cite{Sit08}) that combines both the inner product and the spectral action into a single expression is well worth studying from the perspective of supersymmetry.

\section{Supersymmetric QCD}\label{ch:superQCD} 

In this section, we consider a supersymmetric version of QCD ---the theory of quarks and gluons. For that we will be regarding only one of three generations of particles and we leave all leptons and electroweak gauge bosons out. 

\subsection{The finite spectral triple}
  If we want any chance of finding supersymmetry, we need to enlarge the finite part of the Hilbert space such that it contains not only the quarks and antiquarks, but the gluinos\footnote{We will postpone the (partial) justification of this terminology until \S 14.4.} ---the supersymmetric partners of the gluons and therefore fermions--- as well. Moreover, in order to keep the gauge group to be $SU(3)$ the algebra in our spectral triple should be $M_3(\C)$.

  \begin{defin}\label{def:sQCD_spectraltriple}
The finite spectral triple $(\A_F, \H_F, D_F)$ is defined by
\begin{itemize}
       \item[-] $\A_F := M_3(\mathbb{C})$.
       \item[-] $\H_F := \C^3 \oplus M_3(\mathbb{C})\oplus \C^{3}$, carrying the following representation of $\A_F$:
  \begin{align*}
    \pi(m)(q, g, q') = (mq, mg, q')\label{eq:squark_representation}.
  \end{align*}
       \item[-] $D_F$ is defined as the matrix:
          \begin{align*}
            D_F := \begin{pmatrix} 0 & d & 0 \\ d^* & 0 & e^* \\ 0 & e & 0 \end{pmatrix},
          \end{align*}
          with \mbox{$d : M_3(\mathbb{C}) \to \C^3$} and \mbox{$e : M_3(\mathbb{C}) \to \C^{3}$} arbitrary linear maps. 
     \end{itemize}
  \end{defin}

The conditions of a spectral triple are trivially satisfied; we would like to define a real structure $J_F$ on it as well. Our candidate is
\begin{align}
              J_F(q_1, g, q_2) := (\overline{q_2}, g^*, \overline{q_1}) \in \C^3\oplus M_3(\mathbb{C})\oplus\C^{3},
           \end{align}

This form of $J_F$, as with the representation of the algebra, is as expected: on the two copies of $\C^3$  it is ---up to interchanging these two copies--- the same as in the noncommutative description of the Standard Model \cite{CCM07}; on $M_3(\mathbb{C})$ it is the same as in the Einstein--Yang--Mills system. 

  \begin{lem}\label{lem:d_determines_e}
    With $J_F$ as above, the requirement $D_FJ_F = J_FD_F$ uniquely determines $e$ in terms of $d$:
    \begin{align*}
      \overline{e(g)} &= d(g^*)\qquad\forall\ g \in M_3(\mathbb{C})
    \end{align*}
  \end{lem}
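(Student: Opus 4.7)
The proof is a direct computation: evaluate both $D_F J_F$ and $J_F D_F$ on an arbitrary vector $(q_1,g,q_2)\in\C^3\oplus M_3(\C)\oplus\C^3$, then read off what the identity $D_F J_F=J_F D_F$ imposes on the three components separately. Since $J_F$ swaps the two copies of $\C^3$ (antilinearly) and conjugates the middle entry, while $D_F$ exchanges the middle entry with linear combinations of the two $\C^3$ entries, the block structure decouples the equations cleanly.

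Concretely, I would first compute
\begin{align*}
D_F J_F(q_1,g,q_2) &= D_F(\overline{q_2},g^*,\overline{q_1}) = \bigl(d(g^*),\,d^*(\overline{q_2})+e^*(\overline{q_1}),\,e(g^*)\bigr),\\
J_F D_F(q_1,g,q_2) &= J_F\bigl(d(g),\,d^*(q_1)+e^*(q_2),\,e(g)\bigr) = \bigl(\overline{e(g)},\,(d^*(q_1)+e^*(q_2))^{*},\,\overline{d(g)}\bigr).
\end{align*}
Matching the first $\C^3$ component, independently of $q_1,q_2$, yields exactly $d(g^*)=\overline{e(g)}$ for every $g\in M_3(\C)$, which is the claimed identity. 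Matching the third component gives $e(g^*)=\overline{d(g)}$, but this is equivalent to the first (replace $g$ by $g^*$ and conjugate).

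Finally, I would verify that the middle-component equation, which relates $d^*$ and $e^*$ on $\C^3$ entries, is automatically implied by the formula $\overline{e(g)}=d(g^*)$, by using that $d^*,e^*$ are the adjoints of $d,e$ with respect to the natural Hermitian inner products and that conjugation together with the $g\mapsto g^*$ involution interchange the roles of $d$ and $e$. The only real care needed is keeping track of the antilinearity of $J_F$ so that conjugations land on the correct factors; there is no genuine obstacle. Uniqueness is immediate, because the relation solves for $e(g)$ in closed form as $\overline{d(g^*)}$.
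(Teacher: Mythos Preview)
Your proposal is correct and follows exactly the approach the paper indicates: the paper's proof is simply the one-line remark that ``this follows from a direct computation of $D_FJ_F$ and $J_FD_F$ acting on an element in $\H_F$,'' and you have carried out precisely that computation in detail. Your additional observation that the middle-component condition is a consequence of the relation $\overline{e(g)}=d(g^*)$ (via the adjoint relations) goes slightly beyond what the paper spells out, but it is correct and ensures that the stated relation is not only necessary but also sufficient for $D_FJ_F=J_FD_F$.
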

  \begin{proof}
This follows from a direct computation of $D_FJ_F$ and $J_FD_F$ acting on an element in $\H_F$.
  \end{proof}

Let us check the other conditions for a real spectral triple (of KO-dimension $0$). We compute for the \emph{opposite representation} $\pi^\circ(m) = J_F\pi(m^*)J_F^*$:
  \begin{align}   
      \pi^\circ(m)(q_1, g, q_2) :=  (q_1, gm, m^tq_2)\label{eq:squark_opposite_representation}.
  \end{align}
One easily checks that $\pi(m)$ commutes with $\pi^\circ(m')$ for any $m,m' \in M_3(\C)$, thus fulfilling the second condition in Equation \eqref{eq:order_one_condition}. In order to satisfy the first (i.e. the first-order condition), we make the following choice of $D_F$ in terms of a map $d : M_3(\mathbb{C}) \to \C^{3}$ of the form
  \begin{align}
    d(g) = gv\qquad\forall\ g \in M_3(\mathbb{C})\label{eq:def_mapd}
  \end{align}
for a fixed $v \in \C^3$.
This definition for $d$ corresponds to $d^*(q) = q\overline{v}^t$ (considered as the $3 \times 3$ matrix $(q\overline{v}^t)_{ij} = q_i\overline{v}_j$) for the adjoint of $d$ and by Lemma \ref{lem:d_determines_e} to
  \begin{align*}
    e(g) = g^t\overline{v}, \qquad
    e^*(q) = vq^t ; \qquad (g \in M_3(\mathbb{C}), q \in \C^3).
  \end{align*}
  for the map $e$ and its adjoint. 
 
  \begin{prop}
Given the representations of the algebra \eqref{eq:squark_representation} and \eqref{eq:squark_opposite_representation}, the finite Dirac operator with $d$ and $e$ as above, satisfies the order one condition \eqref{eq:order_one_condition}. Consequently, the set of data $(\A_F, D_F, \H_F, J_F)$ defines a finite real spectral triple of KO-dimension 0.
  \end{prop}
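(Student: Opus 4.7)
The first-order condition $[[D_F,\pi(m)],\pi^\circ(m')]=0$ is the only condition that is not immediate, so I would spend most of the argument on it; the commutant property $[\pi(m),\pi^\circ(m')]=0$ is visible directly from \eqref{eq:squark_representation} and \eqref{eq:squark_opposite_representation} (the two representations act on different tensor slots of $q_1$ and $q_2$, and on $M_3(\C)$ one is left multiplication and the other is right multiplication), and the reality $D_FJ_F=J_FD_F$ is already Lemma~\ref{lem:d_determines_e}.

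The plan is to compute $[D_F,\pi(m)]$ on a generic triple $(q_1,g,q_2)\in\H_F$ explicitly, using the concrete forms $d(g)=gv$, $d^*(q)=q\bar v^{\,t}$, $e(g)=g^t\bar v$ and $e^*(q)=vq^t$. A short direct calculation shows that the first component of the commutator vanishes (because $D_F$ sends $g\mapsto mgv$ regardless of whether $m$ acts before or after), and that one is left with
\begin{equation*}
[D_F,\pi(m)](q_1,g,q_2)=\bigl(0,\,(1-m)\,v\,q_2^{\,t},\,g^t(m^t-1)\bar v\bigr).
\end{equation*}
The crucial structural observation is that in the surviving components the dependence on $g$ enters only as $g^t\bar v$, and the dependence on $q_2$ only as $vq_2^{\,t}$; both are expressions that are acted upon from the right by anything in $M_3(\C)$ placed on the $\bar v$ side.

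Next I would apply $\pi^\circ(m')$, which by \eqref{eq:squark_opposite_representation} acts by $g\mapsto gm'$ and $q_2\mapsto (m')^t q_2$ while leaving $q_1$ untouched. Both orderings
\[
\pi^\circ(m')\circ [D_F,\pi(m)](q_1,g,q_2)\quad\text{and}\quad [D_F,\pi(m)]\circ\pi^\circ(m')(q_1,g,q_2)
\]
then produce the same triple, since $(1-m)vq_2^{\,t}\cdot m'$ equals $(1-m)v\,((m')^tq_2)^t$ and $(m')^tg^t(m^t-1)\bar v$ equals $(gm')^t(m^t-1)\bar v$. This is the concrete manifestation of the principle that $d(g)=gv$ is left-linear in $g$, so the resulting operator commutes with the right action that defines $\pi^\circ$. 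Hence the first-order condition holds.

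To close out the KO-dimension $0$ statement I would simply verify the three sign conventions of the table in Section~\ref{ch:prel}: $J_F^2=+1$ follows immediately from $\overline{\overline{q}}=q$ and $(g^*)^*=g$; the relation $J_FD_F=+D_FJ_F$ is precisely Lemma~\ref{lem:d_determines_e}; and (with the trivial finite grading) $J_F$ commutes with $\gamma_F$. The main obstacle is only notational bookkeeping, in particular tracking transposes and conjugates on the two $\C^3$ summands so that one really sees the left/right multiplication split that makes the first-order condition work; once this is set up, the computation is mechanical.
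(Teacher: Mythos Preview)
Your proof is correct and follows essentially the same route as the paper. The only difference is one of presentation: the paper condenses the computation of $[[D_F,\pi(m)],\pi^\circ(m')]$ into four intertwining identities,
\[
d(mg)=md(g),\quad d^*(mq)=md^*(q),\quad e(gm)=m^te(g),\quad e^*(m^tq')=e^*(q')m,
\]
whereas you carry out the same check directly on the triple $(q_1,g,q_2)$; your observation that the surviving components of $[D_F,\pi(m)]$ depend on $g$ and $q_2$ only through right multiplication is exactly what the paper's last two identities encode.
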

  \begin{proof}
    Writing out \eqref{eq:order_one_condition} and using that $\pi(m) = 1$ on antiquarks and $\pi^\circ(m) = 1$ on quarks gives a number of simultaneous demands:
    \begin{align*}
        d(mg) &= md(g), & d^*(mq) &= md^*q \nonumber\\
        e(gm) &= m^te(g) & e^*(m^tq') &= e^*(q')m\qquad\forall m \in m, g \in M_3(\mathbb{C}), q,q' \in \C^3. 
    \end{align*}
    These are easily seen to be met for the given representations and maps $d$ and $e$.
  \end{proof}

As a preparation for the next section, we determine the inner fluctuations of the finite Dirac operator, as well as the (finite) gauge group and its action.
  \begin{lem} 
\label{lma:inner-finite} 
The inner fluctuations $D_F + A_F + J_F A_F^* J_F^*$ with $A_F \in \Omega_{D_F}(\A_F)$ of the finite Dirac operator $D_F$ are parametrized by a vector $\tilde q \in \C^3$ as
  \begin{align*}
D_{\squark} :=   D_F + A_F + J_FA_F^*J_F^* &= g_3\begin{pmatrix} 0 & A_{\squark} & 0 \\ A^*_{\squark} & 0 & B^*_{\antisquark} \\
              0 & B_{\antisquark} & 0
           \end{pmatrix}
  \end{align*}
  with $g_3$ the QCD-coupling constant and
  \begin{align}
      A^*_{\squark}(q) &=q\antisquark^{\,t}, & A_{\squark}(g) &= g\squark,\nonumber\\  
      B^*_{\antisquark}(q) &= \squark q^t, & B_{\antisquark}(g) &= g^t\antisquark\nonumber.
  \end{align}
  \end{lem}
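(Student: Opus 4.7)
The proof is a direct block-matrix computation. I would represent every operator on $\H_F=\C^3\oplus M_3(\C)\oplus\C^3$ as a $3\times 3$ block matrix and use the explicit forms $d(g)=gv$, $d^*(q)=q\bar v^t$, $e(g)=g^t\bar v$, $e^*(q')=v(q')^t$, together with $\pi(b)=\mathrm{diag}(L_b,L_b,1)$ on the three summands.

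First I would compute $[D_F,\pi(b)]$ block by block. Because $d$ and $d^*$ are right-multiplications by $v$ and $\bar v^t$ respectively, they commute with left-multiplication $L_b$, so the $(1,2)$- and $(2,1)$-blocks of $[D_F,\pi(b)]$ vanish identically. The only surviving pieces are
\[
[D_F,\pi(b)]_{2,3}=(1-L_b)\,e^*,\qquad [D_F,\pi(b)]_{3,2}=e\,(L_b-1).
\]
Summing $A_F=\sum_i\pi(a_i)[D_F,\pi(b_i)]$ therefore yields exactly two nonzero blocks
\[
(A_F)_{2,3}(q')=\tilde q\,(q')^t,\qquad (A_F)_{3,2}(g)=g^t\,\tilde y,
\]
with $\tilde q:=\sum_i a_i(v-b_iv)\in\C^3$ and $\tilde y:=\sum_i(b_i^t-1)\bar v\in\C^3$. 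Both vectors range freely over $\C^3$, but self-adjointness of $A_F$ (required for $D_F+A_F+J_FA_F^*J_F^*$ to be self-adjoint) forces $\tilde y=\bar{\tilde q}$ by computing the adjoint of the $(2,3)$-block against the inner product $\tr(g^*h)$ on $M_3(\C)$ and comparing with the $(3,2)$-block. Hence $A_F$ is controlled by a single vector $\tilde q\in\C^3$.

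Next I would evaluate $J_FA_F^*J_F^*$. Using $J_F^*=J_F$, $J_F^2=1$, and the antilinear flip $J_F(q_1,g,q_2)=(\bar q_2,g^*,\bar q_1)$, a short calculation shows that the $(2,3)$- and $(3,2)$-blocks of $A_F^*$ are transported into the $(1,2)$- and $(2,1)$-slots of $J_FA_F^*J_F^*$, producing $g\mapsto g\tilde q$ and $q\mapsto q\bar{\tilde q}^{\,t}$ respectively. Assembling $D_F+A_F+J_FA_F^*J_F^*$ then gives four nonzero blocks whose $(1,2)$-entry is $g\mapsto g(v+\tilde q)$, whose $(2,3)$-entry is $q'\mapsto(v+\tilde q)(q')^t$, and whose $(2,1)$- and $(3,2)$-entries are their natural adjoints involving $\overline{v+\tilde q}$. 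Introducing the QCD coupling $g_3$ and relabelling $g_3\squark:=v+\tilde q$, which still ranges over all of $\C^3$, gives exactly the matrix displayed in the lemma with $A_\squark$, $A_\squark^*$, $B_\antisquark$, $B_\antisquark^*$ as stated.

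The main obstacle is the bookkeeping of the $J_F$-conjugation step: each conjugation, transposition and component-swap introduced by the antilinear $J_F$ must be tracked so that it lands in the correct block with the correct complex-conjugation pattern. A secondary subtlety is using self-adjointness to collapse the two a priori independent vectors $\tilde q$ and $\tilde y$ to a single squark parameter, and the (harmless) absorption of the fixed vector $v$ and the coupling $g_3$ into the final variable.
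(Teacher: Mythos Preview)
Your proposal is correct and follows essentially the same block-matrix computation as the paper: compute $A_F$, note that only the $(2,3)$- and $(3,2)$-blocks survive, impose self-adjointness to reduce to one vector, then add $J_FA_F^*J_F^*$ and $D_F$ and absorb $v$ and $g_3$ into the squark. The only minor difference is that the paper handles the $J_F$-conjugation via the identity $J_FA_F^*J_F^*=-\sum_i[D_F,\pi^\circ(n_i)]\pi^\circ(m_i)$ rather than by tracking the antilinear swap directly, which spares some of the complex-conjugation bookkeeping you flagged as the main obstacle.
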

  \begin{proof}
  We have $A_F := \sum_i \pi(m_i)[D_F, \pi(n_i)]$ [cf.~\eqref{eq:innerfluctuationform}] which, applied to an element $(q_1, g, q_2) \in \H_F$, gives
  \begin{align}
     \sum_i \pi(m_i)[D_F, \pi(n_i)](q_1, g, q_2)
    &= \sum_i\big(0, m_i[1 - n_i]vq_2^t, g^t\overline{(n_i^* - 1)v})\big)\label{eq:squarkfluctuations}
  \end{align}
  For the other part, $J_FA_F^*J_F^*$, we compute
   \begin{align}  
      \sum_i J_F(\pi(m_i)[D_F, \pi(n_j)])^*J_F^* &= - \sum_i J_F[D_F, \pi(n_i^*)]\pi(m_i^*) J_F = -\sum_i [D_F, \pi^\circ(n_i)]\pi^\circ(m_i)\nonumber, 
  \end{align} 
  where we have used that $D_FJ_F = J_FD_F$. We therefore get
  \begin{align}
    J_FA_F^*J_F(q_1, g, q_2) 
    &= \sum_i \big(gm_i(1 - n_i)v, q_1\overline{(n_i^* - 1)v}^t, 0\big)\label{eq:squarkfluctuations_J}.
  \end{align}
    Requiring $A_F$ to be self-adjoint yields the demand
  \begin{align}
    \sum_i n_i^* - 1 &= \sum_i m_i(1 - n_i), 
  \end{align}
  for the elements of the algebra. Defining 
  \begin{align}
    \squark &:= g_3^{-1}\sum_i [1 + m_i(1 - n_i)]v = g_3^{-1}\sum_i n_i^*v\label{eq:def_squark}.
  \end{align}
  and adding the expressions \eqref{eq:squarkfluctuations} and \eqref{eq:squarkfluctuations_J} for $A_F$ and $J_FA_F^*J_F^*$ respectively to that of $D_F$, we get 
   \begin{align*}
    (D_F + A_F + J_FA_F^*J_F^*)(q_1, g, q_2) &= g_3\big(g\squark, q_1\antisquark^{\,t} + \squark q_2^{t}, g^t\antisquark\big)
  \end{align*}  
which is of the desired form.
  \end{proof}

\subsection{The product geometry and its inner fluctuations}
We next consider the product of the canonical spectral triple $(C^\infty(M), L^2(M,S),\dirac_M)$ associated to a four-dimensional Riemannian spin manifold $M$, with the above spectral triple $(\A_F,\H_F,D_F)$. Explicitly, we have
\begin{align*}
  \A &= C^{\infty}(M, M_3(\mathbb{C}));\nonumber\\
 \H &= L^2(M, S)\otimes \left( \C^3 \oplus M_3(\mathbb{C}) \oplus \C^3 \right);\nonumber\\
     D &= \slashed{\partial}_M\otimes \id + \gamma_5 \otimes D_F.
\end{align*}
The grading $\gamma = \gamma_5 \otimes 1$ and real structure $J= J_M \otimes J_F$ give the resulting real spectral triple KO-dimension 4. 

We will write a generic element in the Hilbert space as $\psi =(\psi_q , \psi_g , \psi_{\bar q})$, according to the above direct sum decomposition. For the bosons, we derive from Equation \eqref{eq:innerfluctuationform} and Lemma \ref{lma:inner-finite} that
\begin{prop}
The inner fluctuations $D \mapsto D_A = D + A + J A J^*$ with $A \in \Omega^1_D(\A)$ are parametrized by an $SU(3)$-gauge potential $\bA_\mu(x)$ ($\mu =1 ,\ldots, 4$) and a $\C^3$-valued function $\squark(x)$ ($x \in M$). Explicitly, we have with $\bA = i \gamma^\mu \bA_\mu$:
$$
D_A = \dirac \otimes 1 + \bA + \gamma_5 \otimes D_{\tilde q}
$$
with $D_{\tilde q}$ as in Lemma \ref{lma:inner-finite}.
\end{prop}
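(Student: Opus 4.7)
The strategy is to exploit the tensor structure of the Dirac operator $D=\dirac_M\otimes\id+\gamma_5\otimes D_F$ to split a generic one-form $A=\sum_i a_i[D,b_i]$ with $a_i,b_i\in C^\infty(M,M_3(\C))$ into a \emph{continuous} part and a \emph{finite} part, analyse each separately, and then assemble the contribution of $JAJ^*$ on top.

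First I would compute $[D,b]$ for $b=b(x)\in C^\infty(M,M_3(\C))$. Because $b$ commutes with $D_F$ pointwise (the matrix algebra acts diagonally on the direct summands with $D_F$ being off-diagonal in the $\C^3\oplus M_3(\C)\oplus\C^3$ decomposition---this is precisely the content of the first-order condition already verified), the commutator splits as
\begin{equation*}
[D,b]=[\dirac_M,b]\otimes\id+\gamma_5\otimes[D_F,b].
\end{equation*}
Summing against the $a_i$ the first summand produces a term of the form $i\gamma^\mu\otimes A_\mu(x)$ with $A_\mu\in C^\infty(M,M_3(\C))$ (this is the standard Connes computation used already in Section~\ref{ch:eym}). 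Adding $JAJ^*$ converts the representation of $A_\mu$ on the middle summand $M_3(\C)$ into the adjoint action and sends the representation on $\C^3\oplus\C^3$ to its self-adjoint combination, yielding $\bA=i\gamma^\mu\bA_\mu$; the trace part is killed exactly as in the $\mathfrak{u}(N)\to\mathfrak{su}(N)$ reduction of Section~\ref{ch:eym}, so $\bA_\mu$ is an $\mathfrak{su}(3)$-valued connection.

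Second, I would deal with the finite summand $\gamma_5\otimes\sum_i a_i[D_F,b_i]$. Pointwise in $x\in M$ this is precisely the expression analysed in Lemma~\ref{lma:inner-finite}: the sum $\sum_i a_i(x)[D_F,b_i(x)]$, together with its $J_F$-conjugate, yields the operator $D_{\squark(x)}-D_F$ with $\squark(x)=g_3^{-1}\sum_i b_i(x)^* v$. Since the $b_i$ depend smoothly on $x$, the assignment $x\mapsto\squark(x)$ gives a smooth $\C^3$-valued function on $M$, and every such $\squark$ arises in this way by choosing representatives as in the lemma. Tensoring with $\gamma_5$ and adding $D=\dirac_M\otimes\id+\gamma_5\otimes D_F$ produces the claimed formula
\begin{equation*}
D_A=\dirac\otimes\id+\bA+\gamma_5\otimes D_{\squark}.
\end{equation*}

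The main obstacle is ensuring that the continuous and finite pieces are \emph{independently} parametrised, i.e.\ that choosing $(\bA_\mu,\squark)$ freely is consistent with a single one-form $A$. This requires noting that the space $\Omega^1_D(\A)$ is the sum of the two image subspaces---which is clear because one can use $b_i$ with $[\dirac_M,b_i]=0$ (constant matrices) to obtain arbitrary $\squark$ without generating a gauge field, and conversely one can use $b_i$ near the identity in the finite factor to produce any $\bA_\mu$ without generating a squark. Once this decoupling is checked, the explicit form of $D_A$ follows by adding the pieces and invoking Lemma~\ref{lma:inner-finite} fibrewise.
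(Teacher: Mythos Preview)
Your overall strategy---splitting $[D,b]$ into a continuous piece and a finite piece, invoking Section~\ref{ch:eym} for the former and Lemma~\ref{lma:inner-finite} fibrewise for the latter---is exactly what the paper does (the paper simply states that the proposition ``derives from'' \eqref{eq:innerfluctuationform} and Lemma~\ref{lma:inner-finite} without writing out the details).

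However, there is a genuine slip in your write-up. You assert that ``$b$ commutes with $D_F$ pointwise'' and attribute this to the first-order condition, yet in the very next display you keep the term $\gamma_5\otimes[D_F,b]$. These two statements are incompatible: $b(x)\in M_3(\C)$ does \emph{not} commute with $D_F$ in general---indeed, if it did there would be no finite inner fluctuations and no squark. The first-order condition says $[[D,a],Jb J^{-1}]=0$, which is a different statement. The correct reason the commutator splits as you wrote is simply that $\gamma_5$ commutes with the (Clifford-scalar) action of $b$, so $[\gamma_5\otimes D_F,\,b]=\gamma_5\otimes[D_F,\pi_F(b)]$ pointwise; no commutation of $b$ with $D_F$ is needed or true. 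Relatedly, your notation $[\dirac_M,b]\otimes\id$ is imprecise, since $[\dirac_M,b]=i\gamma^\mu\partial_\mu b$ still acts on $\H_F$ through $\pi_F$, not as the identity.

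A smaller point: in your decoupling paragraph you propose taking $b_i$ to be constant matrices to produce an arbitrary $\squark$ without generating a gauge field. Constant $b_i$ give $[\dirac_M,b_i]=0$ as desired, but then $\squark=g_3^{-1}\sum_i b_i^*v$ is also constant in $x$. To obtain an arbitrary $\squark(x)$ you still need $x$-dependent $b_i$; one clean way is to take a single term with $a=1$ and $b(x)\in M_3(\C)$ chosen so that $b(x)^*v=g_3\,\squark(x)$, and then correct the unwanted gauge field it produces by adding further terms with $b_i$ close to $1$ in the finite direction.
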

We will identify $\squark$ and $\antisquark$ as the \emph{squark} and \emph{anti-squark}, respectively. As before, $\bA$ will be the gluon, and $\psi_g$ the gluino. This terminology is justified by the action of the gauge group on these fields:
\begin{prop} The gauge group $SU(\A) = C^\infty(M,SU(3))$ acts on the squarks and quarks in the defining representation, on the gluinos in the adjoint representation, and on the gluon as a $SU(3)$-gauge field, {i.e.} for $u \in SU(\A)$:
\begin{align*}
\tilde q \mapsto u \tilde q; \qquad \psi_q \mapsto u \psi_q; \qquad \psi_g \mapsto u \psi_g u^*;\qquad \bA_\mu \mapsto u \bA_\mu u^* + u \partial_\mu u^*.
\end{align*}
    \end{prop}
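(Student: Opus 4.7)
The plan is to derive every claim from the two fundamental gauge-transformation formulae already established earlier in the paper: Hilbert-space vectors transform as $\psi \mapsto u J u J^{-1} \psi$, and inner fluctuations as $A \mapsto u A u^* + u[D,u^*]$ (cf. \eqref{eq:A_gauge_trans}). My first move is to make the combined operator $u J u J^{-1}$ concrete. Using $\pi^\circ(m) = J \pi(m^*) J^{-1}$ at $m = u^*$ gives $J \pi(u) J^{-1} = \pi^\circ(u^*)$, so the gauge action on $\H$ is by $\tilde u := \pi(u)\,\pi^\circ(u^*)$. Plugging in the explicit formulae \eqref{eq:squark_representation} and \eqref{eq:squark_opposite_representation}, I get block-wise
$$
\tilde u (\psi_q,\psi_g,\psi_{\bar q}) = (u\psi_q,\; u\psi_g u^*,\; \bar u\,\psi_{\bar q}),
$$
which reads off directly as the defining representation on $\psi_q$ and the adjoint representation on the gluino $\psi_g$, as required.

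For the gluon, I split $A = \bA + \gamma_5 \otimes A_F$ into its continuous and finite parts and apply $A \mapsto uAu^* + u[D,u^*]$ to each. On the spatial part, $[D,\pi(u^*)] = [\dirac_M, \pi(u^*)]\otimes \id = i\gamma^\mu \pi(\partial_\mu u^*)$, and since $\bA = i\gamma^\mu \bA_\mu$, stripping the common factor $i\gamma^\mu$ from $u\bA u^* + u[D,u^*]$ yields the announced Yang--Mills transformation $\bA_\mu \mapsto u\bA_\mu u^* + u\partial_\mu u^*$.

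For the squark, rather than chasing the parametrization $\squark = g_3^{-1}\sum_i n_i^* v$ from Lemma \ref{lma:inner-finite} through the algebra transformation (which would be cumbersome and non-canonical), I exploit the fact that the \emph{whole} fluctuated Dirac operator transforms by conjugation, $D_A \mapsto \tilde u D_A \tilde u^*$, with the $\tilde u$ computed above. In particular the finite block $D_{\squark}$ must transform as $D_{\squark} \mapsto \tilde u D_{\squark}\tilde u^*$. A direct computation using $\tilde u^*(q_1,g,q_2) = (u^*q_1,\, u^*gu,\, u^T q_2)$ on the $(1,2)$ off-diagonal map $A_{\squark}(g) = g\squark$ gives
$$
g \;\xrightarrow{\tilde u^*}\; u^* g u \;\xrightarrow{A_{\squark}}\; u^* g u\,\squark \;\xrightarrow{\tilde u}\; g u\,\squark,
$$
so the new map reads $A_{\squark}'(g) = g(u\squark)$ and hence $\squark \mapsto u\squark$. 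Consistency of the other off-diagonal blocks $A_{\squark}^*, B_{\antisquark}, B_{\antisquark}^*$ under the same conjugation is automatic from the reality constraint $\antisquark = \overline{\squark}$, which is preserved because $\overline{u\squark} = \bar u\,\antisquark$.

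There is no real analytic obstacle here; the proof is a piece of algebraic bookkeeping. The only step requiring a little care is the opposite representation, specifically tracking how the complex conjugation built into $J_F$ produces the transpose and conjugate factors in $\pi^\circ(u^*)$, and then noticing that one should extract the squark transformation from the conjugation of $D_{\squark}$ as an operator, rather than from the (non-unique) algebraic representatives $\{m_i, n_i\}$ that produced it.
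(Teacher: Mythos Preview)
Your proof is correct and follows essentially the same approach as the paper: both compute the adjoint action $\Ad(u)=\pi(u)\pi^\circ(u^*)$ on $\H$ blockwise, and both extract the squark transformation by conjugating $D_{\squark}$ by $\Ad(u)$ (you isolate the $(1,2)$ block $A_{\squark}$, the paper applies the full $D_{\squark}$ to a generic element, but the computation is the same). The only cosmetic difference is that for the gluon you invoke the general formula \eqref{eq:A_gauge_trans} directly, whereas the paper re-verifies it by computing $\Ad(u)(\partial_\mu+\mathbb{A}_\mu)\Ad(u^*)$ on a gluino.
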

    \begin{proof}
For a real spectral triple, the gauge group $SU(\A)$ acts on the Hilbert space in the adjoint representation $\Ad(u) := \pi(u) \pi^\circ(u^*) = u J u J^*$.
A direct computation shows that on an element in Hilbert space:
      \begin{align}
        \Ad(u)(\psi_q, \psi_g, \psi_{\bar q}) = (u \psi_q, u \psi_g u^*, \overline{u}\psi_{\bar q})\nonumber.
      \end{align}
      Next, we look at how $D_{\squark}$ transforms:
      \begin{align}
        \Ad(u) D_{\squark}\Ad(u^*)(\psi_q, \psi_g, \psi_{\bar q})
        &= \Ad(u) D_{\squark}(u^*\psi_q, u^*\psi_gu, u^t\psi_{\bar q})\nonumber\\
        &= \Ad(u)\big((u^*\psi_gu)\squark, (u^*\psi_q)\antisquark^{\,t} + \squark (u^t\psi_{\bar q})^t, (u^*\psi_gu)^t\antisquark\big)\nonumber\\
        &= \big(u(u^*\psi_gu)\squark, u(u^*\psi_q)\antisquark^{\,t}u^* + u\squark (u^t\psi_{\bar q})^tu^*, \overline{u}(u^*\psi_gu)^t\antisquark\big)\nonumber\\
        &= \big(\psi_gu\squark, \psi_q(\overline{u}\antisquark)^{\,t} + u\squark \psi_{\bar q}^t, \psi_g^t\overline{u}\antisquark\big)\nonumber
      \end{align}
      which corresponds to applying $D_{u\squark}$ to $(\psi_q, \psi_g, \psi_{\bar q})$. 
      
      Last, we check how the gluons transform. For instance, when applied to a gluino $\psi_g$: 
      \begin{align}
        \Ad(u) (\partial_\mu + \mathbb{A}_\mu) \Ad(u^*)\psi_g &= \pi(u)\pi^\circ(u^*)(\partial_\mu(u^*\psi_gu) + [A_\mu, u^*\psi_gu])\nonumber\\
        &= \Ad(u) \psi_g + \partial_\mu \psi_g + \psi_g(\partial_\mu u)u^* + u[A_\mu, u^*\psi_gu]u^*\nonumber\\
        &= \partial_\mu \psi_g + \ad(uA_\mu u^* + u[\partial_\mu, u^*])\psi_g\nonumber\\
        &= (\partial_\mu + \mathbb{A}^u_\mu)\psi_g\nonumber,
      \end{align}
      with $A^u$ as in \eqref{eq:A_gauge_trans}. Similar statements hold when acting on $\psi_q$ and $\psi_{\bar q}$, respectively.
    \end{proof}

\subsection{The spectral action}
  Having found an expression for the inner fluctuations of the product geometry, we now determine the corresponding spectral and fermionic action. Let us abbreviate $D^{(1,0)} = \dirac \otimes 1 + \bA \equiv i \gamma^\mu D_\mu$ to write for the fluctuated Dirac operator:
  \begin{align}
    D_A = D^{(1,0)} + \gamma^5\otimes D_{\squark}\label{eq:full_dirac},
  \end{align}
Before we compute the spectral action, we will first prove some useful lemmas.
    
    \begin{lem}\label{lem:DFsquared}
    For the square of $D_A$ we have
    \begin{align*}
       \tr D_A^2 = \tr (D^{(1,0)})^2 + \tr (D_{\squark}^2)
    \end{align*}
    with
     \begin{align}
      \tr (D_{\squark}^2) &= 12g_3^2|\squark|^2\nonumber
    \end{align}
    \end{lem}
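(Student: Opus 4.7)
The plan is to expand $D_A^2$ using the decomposition $D_A = D^{(1,0)} + \gamma^5 \otimes D_{\squark}$ from \eqref{eq:full_dirac}, argue that the cross (anticommutator) term drops out under $\tr$ via a $\gamma^5$-trace identity, and then compute $\tr(D_{\squark}^2)$ directly from the block form of $D_{\squark}$ given in Lemma \ref{lma:inner-finite}.

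First, I would write
\begin{align*}
  D_A^2 = (D^{(1,0)})^2 + \{D^{(1,0)},\, \gamma^5 \otimes D_{\squark}\} + (\gamma^5)^2 \otimes D_{\squark}^2,
\end{align*}
and observe that $(\gamma^5)^2 = 1$, so the final term is $1 \otimes D_{\squark}^2$ and contributes $\tr(D_{\squark}^2)$ to the trace. For the anticommutator, I would write $D^{(1,0)} = i\gamma^\mu \cD_\mu$ with $\cD_\mu$ the combined spin- and gauge-covariant derivative, and use $\{\gamma^5, \gamma^\mu\} = 0$ to push $\gamma^5$ past every $\gamma^\mu$. The result is an expression proportional to $\gamma^5 \gamma^\mu \otimes [\cD_\mu, D_{\squark}]$. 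Since the spinor trace $\tr(\gamma^5 \gamma^\mu) = 0$ in four dimensions, this anticommutator contributes nothing to $\tr D_A^2$, giving the first equation.

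Next, for $\tr(D_{\squark}^2) = 12\, g_3^2 |\squark|^2$, I would square the block matrix for $D_{\squark}$ and read off the diagonal entries: $g_3^2\, A_{\squark} A^*_{\squark}$ on the quark sector, $g_3^2\, (A^*_{\squark} A_{\squark} + B^*_{\antisquark} B_{\antisquark})$ on the gluino sector, and $g_3^2\, B_{\antisquark} B^*_{\antisquark}$ on the antiquark sector. By cyclicity, each of the four contributions reduces to the trace (on $\C^3$ or $M_3(\C)$) of left or right multiplication by a rank-one matrix of the form $\squark\,\antisquark^{\,t}$. Using $\antisquark = \overline{\squark}$, each such trace evaluates to $3|\squark|^2$, and summing the four contributions yields $12\, g_3^2 |\squark|^2$.

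The main obstacle is controlling the cross term. A direct computation of $[\cD_\mu, D_{\squark}]$ produces nontrivial contributions from $\partial_\mu \squark$ and the commutator of $D_{\squark}$ with the gluon $\mathbb{A}_\mu$ (which do not individually vanish). The cleanest route around this is the spinor-trace identity $\tr(\gamma^5 \gamma^\mu) = 0$, which kills the entire anticommutator wholesale, so no explicit covariant-derivative calculation on $D_{\squark}$ is needed.
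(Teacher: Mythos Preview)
Your proof is correct and follows essentially the same route as the paper: the cross term is identified as $\gamma^5\gamma^\mu\otimes[\cD_\mu,D_{\squark}]$ and killed by the spinor trace, and $\tr D_{\squark}^2$ is computed from the block decomposition using $A_{\squark}A^*_{\squark}=|\squark|^2\,\id_{\C^3}$ (and the analogous identities). The only cosmetic difference is that the paper uses cyclicity at the outset to write $\tr D_{\squark}^2=2g_3^2\tr(A^*_{\squark}A_{\squark})+2g_3^2\tr(B^*_{\antisquark}B_{\antisquark})$, whereas you list all four diagonal blocks before collapsing them.
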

    \begin{proof}
The cross term in the square of $D_A$ equals $\gamma^5[D^{(1,0)}, 1\otimes D_{\squark}]$, which vanishes upon taking the trace. For the square of the finite part we find 
    \begin{align}
      \tr (D_{\squark}^2) &= 2g_3^2\tr[A^*_{\squark}A_{\squark}] + 2g_3^2\tr[B^*_{\antisquark}B_{\antisquark}]\nonumber.
    \end{align}
   If we apply this first operator on the right hand side on a quark $q$ we find that    
    $A_{\squark}A^*_{\squark} = \diag|\squark|^2$. With a similar calculation for $B_{\antisquark}$, we arrive at the result.
    \end{proof}
    
    \begin{lem}\label{lem:DFfourth}
      For the fourth power of the finite Dirac operator $D_{\squark}$ we have
      \begin{align}
          \tr D_{\squark}^4 = 16g_3^4|\squark|^4.
      \end{align}
    \end{lem}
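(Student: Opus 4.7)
The plan is to use the block form of $D_{\squark}$ from Lemma \ref{lma:inner-finite}, compute its square block-by-block, and then take the trace of the square of that. Writing $D_{\squark} = g_3 M$ with
$$M = \begin{pmatrix} 0 & A_{\squark} & 0 \\ A^*_{\squark} & 0 & B^*_{\antisquark} \\ 0 & B_{\antisquark} & 0 \end{pmatrix},$$
a direct block multiplication gives $M^2$ with diagonal blocks $X = A_{\squark}A^*_{\squark}$, $Z = A^*_{\squark}A_{\squark} + B^*_{\antisquark}B_{\antisquark}$, $V = B_{\antisquark}B^*_{\antisquark}$ and off-diagonal entries $Y = A_{\squark}B^*_{\antisquark}$ at position $(1,3)$ and $W = B_{\antisquark}A^*_{\squark}$ at position $(3,1)$. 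Squaring $M^2$ and using cyclicity of the trace then yields
$$g_3^{-4}\tr D_{\squark}^4 \;=\; \tr X^2 + \tr V^2 + \tr Z^2 + 2\tr(YW),$$
which splits the computation into four independent pieces.

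The two outer diagonal blocks are straightforward. From $A^*_{\squark}(q)=q\antisquark^t$ and the identity $\antisquark^t\squark = |\squark|^2$ one finds $X = A_{\squark}A^*_{\squark} = |\squark|^2\,\id_{\C^3}$; the identical calculation for $B$ gives $V = B_{\antisquark}B^*_{\antisquark} = |\squark|^2\,\id_{\C^3}$. Each of these therefore contributes $3|\squark|^4$ to the sum. As a consistency check, the same two identities combined with $\tr Z$ reproduce the value $12g_3^2|\squark|^2$ of Lemma \ref{lem:DFsquared}.

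The main obstacle is the middle block, since $Z$ acts on the nine-dimensional space $M_3(\C)$ rather than on $\C^3$. A short calculation shows $A^*_{\squark}A_{\squark}(g) = gP$ and $B^*_{\antisquark}B_{\antisquark}(g) = Pg$, where $P := \squark\antisquark^t$ is rank one with $\tr P = |\squark|^2$ and $P^2 = |\squark|^2 P$. Thus $Z$ is the sum of commuting left- and right-multiplication operators by $P$, and $\tr Z^2$ decomposes into three pieces. Using the standard trace identities on $M_N(\C)$ (with $N=3$) that $\tr(L_A L_B) = N\tr(AB)$, $\tr(R_A R_B) = N\tr(BA)$ and $\tr(L_A R_B) = (\tr A)(\tr B)$, one obtains $\tr Z^2 = 2N\tr P^2 + 2(\tr P)^2 = 6|\squark|^4 + 2|\squark|^4 = 8|\squark|^4$. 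The factor $N$ on the first term is the reason the final answer is $16$ rather than a smaller integer.

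Finally, for the cross term one composes the four maps step by step: $A^*_{\squark}(q) = q\antisquark^t$, then $B_{\antisquark}$ produces $(q^t\antisquark)\antisquark$, then $B^*_{\antisquark}$ produces $(q^t\antisquark)\squark\antisquark^t$, and finally $A_{\squark}$ gives $(q^t\antisquark)|\squark|^2\squark$, using $\antisquark^t\squark = |\squark|^2$ once more. The result is the rank-one operator $|\squark|^2\squark\antisquark^t$ on $\C^3$, whose trace is $|\squark|^4$, contributing $2|\squark|^4$. Summing the four pieces gives $(3+3+8+2)|\squark|^4 = 16|\squark|^4$, and restoring the factored $g_3^4$ yields the claimed identity $\tr D_{\squark}^4 = 16g_3^4|\squark|^4$.
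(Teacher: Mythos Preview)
Your proof is correct. The paper's own proof is merely a one-line sketch (``The calculation bears strong resemblance with the previous lemma, the main difference lies in additional cross terms''), so your block-by-block computation is exactly the kind of direct verification the paper alludes to, carried out in full.
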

    \begin{proof}
       The calculation bears strong resemblance with the previous lemma, the main difference lies in additional cross terms. 
     \end{proof}  
  
   \begin{lem}\label{lem:DFcomm} For the commutator between the continuous $D^{(1,0)} = i \gamma^\mu D_\mu$ and finite Dirac operators we have
      \begin{align}
        [D_\mu, D_{\squark}](\psi_q, g, \psi_{\bar q}) &= D_{(\partial_{\mu} + g_3A_\mu)\squark}(\psi_q, g, \psi_{\bar q})\nonumber.
      \end{align}
  \end{lem}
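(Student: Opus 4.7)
The plan is to check the identity componentwise on the decomposition $\H_F = \C^3 \oplus M_3(\C) \oplus \C^3$. Split $D_\mu = \nabla_\mu^S + g_3 A_\mu$; since $\nabla_\mu^S$ acts only on the spinor factor it commutes with $D_{\squark}$, reducing the claim to $[\partial_\mu + g_3 A_\mu, D_{\squark}](\psi_q, g, \psi_{\bar q}) = D_{(\partial_\mu + g_3 A_\mu)\squark}(\psi_q, g, \psi_{\bar q})$. This has the flavour of a Leibniz rule: the $\partial_\mu$ acting on $\psi_q$, $g$, and $\psi_{\bar q}$ should cancel between the two orderings, leaving only $\partial_\mu$ hitting the background fields $\squark$ (and $\antisquark$), plus gauge-field cross terms that should reassemble into the defining-representation covariant derivative on $\squark$.

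To carry this out I would use that $A_\mu$ acts in three different representations on the three summands, as dictated by the $SU(\A)$-action established in the preceding proposition: the defining representation on $\psi_q$, the adjoint on $g$, and the complex-conjugate defining representation on $\psi_{\bar q}$. For the first slot the calculation is immediate: the $(\partial_\mu g)\squark$ terms cancel by Leibniz, while the two gauge-field contributions combine as $g_3^2 A_\mu(g\squark) - g_3^2 [A_\mu, g]\squark = g_3^2 g A_\mu \squark$, producing a total of $g_3 g(\partial_\mu + g_3 A_\mu)\squark$, which is exactly the first slot of $D_{(\partial_\mu + g_3 A_\mu)\squark}$. By the transposition symmetry $\squark \leftrightarrow \antisquark$ and $A_{\squark} \leftrightarrow B_{\antisquark}$ between the first and third summands, the third slot follows by an identical argument.

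The main obstacle is the second slot, which receives contributions from both $A_{\squark}^*(\psi_q) = \psi_q \antisquark^{\,t}$ and $B_{\antisquark}^*(\psi_{\bar q}) = \squark \psi_{\bar q}^{\,t}$, and on which $A_\mu$ acts in the adjoint representation on the output. Here I would use crucially the identification $\antisquark = \overline{\squark}$ forced by self-adjointness in Lemma \ref{lma:inner-finite}, together with the compatibility between the conjugate-defining representation on $\psi_{\bar q}$ and ordinary complex conjugation of the defining representation on $\squark$. After cancelling the $\partial_\mu \psi_q$ and $\partial_\mu \psi_{\bar q}$ terms between the two orderings and carefully tracking the three representation-dependent gauge-field pieces, the result should collapse to $\psi_q\,\overline{(\partial_\mu + g_3 A_\mu)\squark}^{\,t} + (\partial_\mu + g_3 A_\mu)\squark\,\psi_{\bar q}^{\,t}$, which is precisely the second slot of $D_{(\partial_\mu + g_3 A_\mu)\squark}$. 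Once this representation-theoretic bookkeeping is verified, the slot-by-slot match completes the proof.
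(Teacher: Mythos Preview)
Your approach is essentially the same as the paper's: both compute the commutator slot by slot using the explicit action of $D_\mu$ on each summand and the anti-Hermiticity $A_\mu^* = -A_\mu$ to convert $\antisquark^{\,t} A_\mu$ into $(\overline{A_\mu\squark})^{\,t}$. One wording issue: you write that $\nabla_\mu^S$ commutes with $D_{\squark}$ and then reduce to $[\partial_\mu + g_3 A_\mu, D_{\squark}]$, which is inconsistent---it is only the spin-connection piece $\omega_\mu$ that commutes with $D_{\squark}$, not the full $\nabla_\mu^S = \partial_\mu + \omega_\mu$, since $\squark$ is $x$-dependent; your subsequent Leibniz-rule analysis shows you understand this, so just fix the sentence.
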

  \begin{proof}
     We use that $D_\mu$ acts on the Hilbert space as:
     \begin{align*}
D_\mu (\psi_q, \psi_g, \psi_{\bar q}) = \left( (\partial_\mu + g_3A_\mu)\psi_q, (\partial_\mu + g_3\mathbb{A}_\mu)\psi_g, (\partial_\mu + g_3\overline{A}_\mu)\psi_{\bar q} \right)
      \end{align*}
      Thus we get  
from applying the commutator (whilst putting $g_3=1$ for simplicity):
      \begin{align}
         [D_\mu, D_{\tilde q}](\psi_q, g, \psi_{\bar q})
         &= \big(g(\partial_\mu + A_\mu)\squark, \psi_q[\partial_\mu\antisquark]^t - (\psi_q\antisquark^{\,t}) A_\mu + [(\partial_\mu + A_\mu)\squark]\psi_{\bar q}^{t}, g^t(\partial_\mu - A_\mu^t)\antisquark\big)\nonumber\\
         &= \big(g(\partial_\mu + A_\mu)\squark, \psi_q[\overline{(\partial_\mu + A_\mu)\squark}]^{t} + [(\partial_\mu + A_\mu )\squark] \psi_{\bar q}^{t}, g^t\overline{(\partial_\mu + A_\mu)\squark}\big),\nonumber
      \end{align}
      where we have frequently used that $A^*_\mu = -A_\mu$.
  \end{proof}
      
We will proceed ---as in Section \ref{ch:eym}--- by making an expansion in powers of $D_A^2$. We first determine the endomorphism $E'$ defined by 
    \begin{align*}
      D_A^2 = \nabla^*\nabla - E'.
    \end{align*}
Here $\nabla$ 
is the connection defined by $\bA$ on the tensor product of the spinor bundle by the trivial bundle with fiber $\C^3 \oplus M_3(\C) \oplus \C^3$. With respect to the Einstein--Yang--Mills system, we are simply adding the term $\gamma_5 \otimes D_{\tilde q}$ to $D^{(1,0)}$; this is easily seen to leave $\Omega_{\mu\nu}$ unchanged, and having the following effect on $E$:
    \begin{align}
      - E &\mapsto -E' = -E - i\gamma^5\gamma^\mu[D_\mu, D_{\tilde q}] +  D_{\tilde q}^2
\label{eq:neqE}
    \end{align}
compared to $E = \frac{1}{4}R\otimes\id - \frac{1}{2}\gamma^\mu\gamma^\nu\otimes\mathbb{F}_{\mu\nu}$ prior to adding squarks and quarks. The minus sign giving rise to the commutator comes from interchanging $\gamma^\mu$ and $\gamma^5$. The term $\omega_\mu$ then drops from the expression, leaving the commutator of $D_\mu := \partial_\mu + g_3\mathbb{A}_\mu$ with $D_{\tilde q}$.

    \begin{thm}
The spectral action $S_b[A]$ for the inner fluctuations of the spectral triple $(\A,\H,D)$ introduced above is given by the spectral action $S'_{b}$ for the Einstein--Yang--Mills system (cf. Theorem \ref{thm:bos-action-EYM}) plus additional terms of the form
      \begin{align*}
         S_b[A] &=S'_{b}[A] + \int_M\left[-\frac{6f_2}{\pi^2}g_3^2\Lambda^2|\squark(x)|^2 + g_3^2\frac{f(0)}{4\pi^2}(8g_3^2|\squark(x)|^4
+ 6|D_{\mu}\squark(x)|^2 - 3Rg_3^2|\squark(x)|^2)\right]\volelt
      \end{align*}
\label{thm:bos-action-sQCD}
    \end{thm}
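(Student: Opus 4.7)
The plan is to compute $S_b[A]=\tr f(D_A/\Lambda)$ via the heat-kernel expansion \eqref{eq:expansion_action_functional}, comparing term-by-term with the Einstein--Yang--Mills action already obtained in Theorem \ref{thm:bos-action-EYM}. Because $D_A = D^{(1,0)}+\gamma_5\otimes D_{\squark}$, the analysis of Proposition \ref{prop:squareD} carries over with the only modification localized in the endomorphism $E$; precisely, \eqref{eq:neqE} gives $E'=E+X+Y$ with
\[
X := i\gamma^5\gamma^\mu \otimes [D_\mu, D_{\squark}] \quad\text{and}\quad Y := -\,\id\otimes D_{\squark}^{2},
\]
while $\Omega_{\mu\nu}$ and every purely gravitational quantity are unchanged. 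Hence the new $\squark$-dependent contributions to $S_b[A]$ are entirely controlled by how $E$ enters $a_2(D_A^2)$ and $a_4(D_A^2)$ in \eqref{eq:gilkey2}--\eqref{eq:gilkey3} via $\tr(E')$, $\tr(E'^{2})$ and $\tr(RE')$. (The bare dimension shift $9\to 15$ of the finite Hilbert space merely renormalizes the cosmological and Einstein--Hilbert coefficients already present in $S'_b$.)

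The next step is to eliminate all vanishing pieces by a spinor-trace parity argument. The summand $X$ carries a single $\gamma^5$ multiplying one explicit $\gamma^\mu$, so its spinor trace is zero, and the same argument kills $\tr(EX)$, $\tr(XY)$ and $\tr(RX)$, since in each case one traces $\gamma^5$ against an odd number of gamma-matrices. Writing $E=E_0+E_F$ with $E_0:=\tfrac14 R\otimes\id$ and $E_F:=-\tfrac12 \gamma^\mu\gamma^\nu\otimes\mathbb{F}_{\mu\nu}$, the mixed piece $\tr(E_F Y)$ also vanishes, because its spinor trace yields $4g^{\mu\nu}$ contracted against the antisymmetric $\mathbb{F}_{\mu\nu}$. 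What is left to evaluate is $\tr(Y)$, $\tr(E_0 Y)$, $\tr(RY)$, $\tr(X^{2})$ and $\tr(Y^{2})$.

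Each of these I would reduce to polynomials in $\squark$ using Lemmas \ref{lem:DFsquared}--\ref{lem:DFcomm}. From $Y$ and Lemma \ref{lem:DFsquared} one gets $\tr(E')-\tr(E) = -4\,\tr_F(D_{\squark}^{2}) = -48\,g_3^{2}|\squark|^{2}$; after the $(4\pi)^{-2}$ weight of \eqref{eq:gilkey2} and the $2\Lambda^{2}f_2$ prefactor of \eqref{eq:expansion_action_functional} this reproduces the mass term $-\tfrac{6f_2}{\pi^{2}}g_3^{2}\Lambda^{2}|\squark|^{2}$. In $a_4(D_A^2)$, the combination $2\tr(E_0 Y)+\tr(RE')$ delivers the $R|\squark|^{2}$ piece; the square $\tr(Y^{2}) = 4\,\tr_F(D_{\squark}^{4})$ reduces via Lemma \ref{lem:DFfourth} to the quartic $|\squark|^{4}$; and $\tr(X^{2})$, after using $\gamma_5^{2}=1$ and $\tr(\gamma^\mu\gamma^\nu)=4g^{\mu\nu}$, becomes $4\,\tr_F\!\bigl([D^\mu,D_{\squark}][D_\mu,D_{\squark}]\bigr)$, which by Lemma \ref{lem:DFcomm} equals $4\,\tr_F\!\bigl(D_{(D_\mu\squark)}\,D_{(D^\mu\squark)}\bigr)$ and is then computed by polarizing Lemma \ref{lem:DFsquared} to produce the gauge-covariant kinetic term $|D_\mu\squark|^{2}$. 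Assembling these with the $(4\pi)^{-2}/360$ weights and the coefficients $180$, $-60$ of \eqref{eq:gilkey3}, multiplied by the $f(0)$ prefactor, delivers exactly the three additional terms claimed.

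The main obstacle is bookkeeping rather than conceptual. One must keep distinct the full trace over $L^{2}(M,S)\otimes(\C^{3}\oplus M_3(\C)\oplus\C^{3})$ from the finite trace used in Lemmas \ref{lem:DFsquared}--\ref{lem:DFcomm}, carefully restore the spinor factor $4$ whenever the identity on $S$ is traced, correctly propagate the coupling $g_3$ introduced via Lemma \ref{lma:inner-finite}, and verify that the sign produced by $\gamma_5^{2}=1$ in squaring $X$ combines with the Seeley--DeWitt normalizations to yield exactly the coefficient $+6$ in front of $|D_\mu\squark|^{2}$. Once these numerical checks go through, no further input is required and the stated formula for $S_b[A]$ follows.
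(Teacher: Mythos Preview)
Your proposal is correct and follows essentially the same route as the paper's proof: both argue that the finite fluctuation leaves $\Omega_{\mu\nu}$ untouched and modifies only $E$ by the two pieces you call $X$ and $Y$, eliminate the cross terms involving $X$ by spinor-trace parity and the $E_FY$ term by antisymmetry of $\mathbb{F}_{\mu\nu}$, and then reduce the surviving traces $\tr(Y)$, $\tr(X^2)$, $\tr(Y^2)$, $2\tr(E_0Y)$, $\tr(RY)$ to the squark polynomials via Lemmas \ref{lem:DFsquared}--\ref{lem:DFcomm} before inserting the Seeley--DeWitt weights. Your remark that the enlargement $9\to15$ of the finite fibre merely shifts the cosmological and Einstein--Hilbert constants hidden in $S'_b$ is a point the paper leaves implicit; otherwise the two arguments coincide.
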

    \begin{proof}
      From \eqref{eq:gilkey2} wee see that the contributions to the Lagrangian of $\mathcal{O}(\Lambda^2)$ come from $\tr(E')$. Since the trace of the second term of \eqref{eq:neqE} vanishes, we are left with
      \begin{align}
        \tr(E') &= \tr(E) - 4\tr (D^{(0,1)})^2 = \tr(E) - 48g_3^2|\squark|^2\nonumber, 
      \end{align}
      by virtue of Lemma \ref{lem:DFsquared}. \\
      
      Since $\Omega_{\mu\nu}$ is unaltered, all extra terms we have on $\mathcal{O}(\Lambda^0)$ result from $\tr(RE')$ and $\tr(E'^2)$ [see \eqref{eq:gilkey3}]. For the first we have
      \begin{align}
        \tr(RE') = \tr(RE) - 4R\tr(D^{(0,1)})^2 = \tr(RE) - 48g_3^2R|\squark|^2,
      \end{align}
      whereas the second gives
      \begin{align}
        \tr(E'^2) &= \tr(E^2) + \tr(i\gamma^5\gamma^\mu[D_\mu, D_{\squark}])^2 + \tr[(D_{\tilde q})^2]^2 -\frac{1}{2}\tr[ R\otimes (D_{\tilde q})^2] \nonumber\\
        &= \tr(E^2) + 4\tr([D_\mu, D_{\tilde q}][D^\mu, D_{\tilde q}]) + 4\tr(D_{\tilde q}^4)- 2R\tr(D_{\tilde q}^2)\label{eq:traceEnewsquared}.
      \end{align}
In the first step we have used that terms of the Clifford algebra proportional to $\gamma^{\mu}\gamma^{\nu}$ ($\mu < \nu$), $\gamma^5\gamma^\mu$ and $1$ are orthogonal, and we consequently only retain the squares of the terms in \eqref{eq:neqE} plus one cross-term. Now for the second and the last terms of \eqref{eq:traceEnewsquared} we can use Lemmas \ref{lem:DFfourth} and \ref{lem:DFcomm} with which the former becomes
      \begin{align}
        \tr([D_\mu, D_{\tilde q}][D^\mu, D_{\tilde q}]) &= \tr D_{(\partial_\mu + g_3A_\mu)\squark}D_{(\partial_{\mu} + g_3A_\mu)\squark}= 12g_3^2|(\partial_{\mu} + g_3A_\mu)\squark|^2 \label{eq:squark_kinetic}.
      \end{align}
      Taking the expansion of the spectral action \eqref{eq:expansion_action_functional}, with the coefficients taken from \eqref{eq:gilkey2} and \eqref{eq:gilkey3} we get the following extra extra contributions:
      \begin{align}
        \text{order }\Lambda^2&:\qquad -2f_2 \frac{1}{(4\pi)^2} 4\tr(D_{\tilde q})^2 = -\frac{6}{\pi^2}f_2g_3^2|\squark|^2,\nonumber\\
        \text{order }\Lambda^0&:\qquad f(0) \frac{1}{(4\pi)^2}\frac{1}{360}\big[-60(- 48g_3^2R|\squark|^2) + 180(4\cdot 12|(\partial_\mu + g_3A_\mu)\squark|^2+ 64|\squark|^4 -24R|\squark|^2)\big]\nonumber
       \end{align} 
      which ends the proof.
    \end{proof}

In order to have manifest supersymmetry with the fermionic action $S_f[A,\psi_q,\psi_g]$ we have to reduce once more the degrees of freedom for the spinor $\psi_g$. This is completely analogous to what happens in the Einstein--Yang--Mills system in Section \ref{ch:susy_in_EYM}: we replace the $M_3(\C)$-valued Dirac spinor $\psi_g$ by two $\su(3)$-valued Weyl spinors $\psi_g$ and $\chi_g$ of opposite chirality. 
\begin{thm}
The fermionic action for the triple $(\A,\H,D)$ is given by 
\begin{align*}
S_f[A,\psi_q,  \psi_g,\chi_g, \psi_{\overline{q}}]&\equiv \left\langle (\psi_q, \chi_g, \psi_{\bar q}), D_A  (\psi_q, \psi_g, \psi_{\bar q})\right\rangle\\
 &=\langle \psi_q, (\dirac + A) \psi_q \rangle + \langle\chi_g, (\dirac+\bA) \psi_g \rangle+ \langle \psi_{\bar q}, (\dirac+\bar A) \psi_{\bar q}\rangle \\
& \qquad+\langle \psi_q, \psi_g \tilde q \rangle + \langle \chi_g \tilde q, \psi_q \rangle + \langle \chi_g^t \bar{\tilde q}, \psi_{\overline{q}} \rangle + \langle \psi_{\overline{q}}, \psi_g^t \bar{\tilde q} \rangle
\end{align*}
\label{thm:ferm-action-sQCD}
\end{thm}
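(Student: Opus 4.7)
The plan is to expand $\langle (\psi_q, \chi_g, \psi_{\bar q}), D_A (\psi_q, \psi_g, \psi_{\bar q})\rangle$ directly, using the decomposition $D_A = D^{(1,0)} + \gamma_5 \otimes D_{\tilde q}$ from \eqref{eq:full_dirac}. I would split the inner product correspondingly into a kinetic part and a Yukawa-type part, and match each contribution to a term of the displayed formula.

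For the kinetic part coming from $D^{(1,0)} = \dirac\otimes 1 + \bA$, the key observation is that this operator acts block-diagonally on the three summands of $\H_F = \C^3 \oplus M_3(\C) \oplus \C^3$, but in three different representations of the gauge potential. On $\psi_q$ only $\pi$ acts, giving the defining representation coupling to $A_\mu$. On $\psi_{\bar q}$ only $\pi^\circ$ acts, giving $\bar A_\mu$ by \eqref{eq:squark_opposite_representation}. On the middle slot $\psi_g$, both $\pi$ and $\pi^\circ$ act, producing the adjoint coupling $\mathrm{ad}(A_\mu)$ as in the Einstein--Yang--Mills calculation \eqref{eq:fluctuations}. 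Pairing each block against the corresponding entry of the bra then yields the three kinetic terms $\langle \psi_q, (\dirac+A)\psi_q\rangle$, $\langle \chi_g, (\dirac+\bA)\psi_g\rangle$ and $\langle \psi_{\bar q}, (\dirac+\bar A)\psi_{\bar q}\rangle$, exactly as in Section \ref{ch:eym}.

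For the Yukawa part, I would substitute the explicit form of $D_{\tilde q}$ from Lemma \ref{lma:inner-finite}. The three slots of $\gamma_5 \otimes D_{\tilde q}(\psi_q, \psi_g, \psi_{\bar q})$ equal $g_3\gamma_5 \psi_g \tilde q$, $g_3\gamma_5(\psi_q \antisquark^{\,t} + \tilde q\,\psi_{\bar q}^{\,t})$ and $g_3\gamma_5 \psi_g^{\,t}\antisquark$, respectively. Pairing the outer slots with $\psi_q$ and $\psi_{\bar q}$ directly produces $\langle \psi_q, \psi_g \tilde q\rangle$ and $\langle \psi_{\bar q}, \psi_g^{\,t}\antisquark\rangle$. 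Pairing the middle slot with $\chi_g$ using the $M_3(\C)$ inner product $\langle X, Y\rangle = \tr(X^*Y)$ and elementary manipulations (cyclicity of the trace together with suitable transposes and complex conjugates) rewrites the two summands as $\langle \chi_g \tilde q, \psi_q\rangle$ and $\langle \chi_g^{\,t}\antisquark, \psi_{\bar q}\rangle$.

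The main subtlety is chirality bookkeeping for the gluino sector, where $\psi_g \in \H^+$ and $\chi_g \in \H^-$ live in opposite eigenspaces of the grading, in accordance with the Weyl reduction forced upon us in Section \ref{ch:susy_in_EYM}. The factor $\gamma_5$ in front of $D_{\tilde q}$ must be tracked carefully: it is precisely what makes the off-chirality pairings involving $\chi_g$ in one slot and $\psi_g$-dependent output in another slot survive, while it must not introduce spurious signs in the remaining Yukawa terms. Once this chirality accounting is done, the rest of the argument is routine linear-algebraic manipulation.
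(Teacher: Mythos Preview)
Your proposal is correct and follows exactly the route the paper takes: the theorem is stated without a formal proof, but the expansion you describe is precisely what the paper relies on---the kinetic block-diagonal part is the Einstein--Yang--Mills computation of Section~\ref{ch:eym}, and the Yukawa part is the explicit action of $D_{\tilde q}$ from Lemma~\ref{lma:inner-finite}, which the paper spells out in the Discussion (the displayed line under ``Squark-quark-gluino'') in the same form $(\gamma_5\psi_g\tilde q,\,\gamma_5(\psi_q\antisquark^{\,t}+\tilde q\,\psi_{\bar q}^{\,t}),\,\gamma_5\psi_g^{\,t}\antisquark)$ that you obtain. Your remarks on the trace manipulation for the middle slot and on the $\gamma_5$/chirality bookkeeping are the only nontrivial points, and both are handled correctly.
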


\subsection{Discussion}

In summary, we have added the superpartners of the QCD-particles (squarks and gluinos) to the theory, in conformity to the `paradigm' of NCG: fermions are elements of the Hilbert space, whereas bosons arise as inner fluctuations of a Dirac operator. Having quarks, gluinos and anti-quarks as the fermionic constituents, the freedom to choose the (finite part of the) Dirac operator was seen to be very little. On top of that, this construction led to the fact that these superpartners are in the right representation of the gauge group. A computation of the spectral action and the fermionic action then led to additional terms over the supersymmetric Einstein--Yang--Mills system considered in Section \ref{ch:susy_in_EYM}. We will now interpret these additional terms.

Note that for supersymmetry at least the number of degrees of freedom need to be the same. For that, the finite part of the gluinos has to be reduced from $M_3(\mathbb{C})$ to $\mathfrak{su}(3)$ ---a problem that was dealt with in Section \ref{ch:susy_in_EYM}. As far as the quarks and squarks are concerned, we have not addressed the apparent discrepancy between degrees of freedom yet. Indeed, the squarks are described by a $\C^3$-valued function, whereas quarks are described by a $\C^3$-valued Dirac spinor, {i.e.} a mismatch of a factor of 4. This is due to the fact that we have ignored isospin, something that will await another time.

We next compare the above results (Proposition \ref{thm:bos-action-sQCD} and \ref{thm:ferm-action-sQCD}) with that of the Minimally Supersymmetric Standard Model (MSSM); Kraml \cite{Krm99} and Chung et al.~ \cite{CEKKLW05} provide lengthy expositions on the MSSM. In the latter, the various MSSM-interactions are conveniently listed in the appendix.
We first switch to flat Euclidean space by taking $\omega_\mu = 0$ and $R = 0$ and working on a local chart of $M$. For each of the interactions that appear we will at the same time make the switch from the current notation to the one more common in physics and translate the (relevant pieces of the) Lagrangian as found in the literature to this context.

First, the free part of the action $S_b+S_f$ in Proposition \ref{thm:bos-action-sQCD} and \ref{thm:ferm-action-sQCD} coincides with the usual kinetic terms for the quark, squark, gluon and gluino. Note the additional coupling of the squark to the scalar curvature of $M$.

\begin{itemize}
\item \emph{Squark-quark-gluino}

The quark is described by $\psi_q= \psi_q^i \otimes e_i \in L^2(M,S) \otimes \C^3$, the antiquark by $\psi_{\bar q}= \psi_{\bar q}^i \otimes e_i \in L^2(M,S )\otimes \C^3$, and the gluino by a pair of $\su(3)$-valued Weyl spinors $\psi_g =\psi_g^a \otimes T_a$ and $\chi_g = \chi_g^a \otimes T_a$. The finite part of the Dirac operator gives in the fermionic action the term:
  \begin{align*}
&   \left( (\psi_q, \chi_g, \psi_{\bar q}),  (\gamma^5\otimes D_{\tilde q}) ( \psi_q, \psi_g. \psi_{\bar q} ) \right)\\
& \qquad = 
\left( (\psi_q, \chi_g, \psi_{\bar q}),  (\gamma_5 \psi_g \squark, \gamma_5 (\psi_q \squark^t + \squark \psi_{\bar q}^t), \gamma_5 \psi_g^t \antisquark ) \right)
\\
    & \qquad= g_3(T_a)_{ik}\bigg[\rinpr{\psi_{q}^i}{\gamma^5\psi^a_{g}}\squark^k - \rinpr{\chi_g^a}{\gamma^5\psi^k_q}\antisquark^i -  \rinpr{\chi_g^a}{\gamma^5\psi^i_{\bar q}}\squark^k + \rinpr{\psi_{\bar q}^k}{\gamma_5 \psi_g^a} \antisquark^i   \bigg]\nonumber
  \end{align*}
Here the transpose ${}^t$ refers to the finite index only and $(\cdot,\cdot)$ is the hermitian structure in the spinor bundle ({i.e.} summation over spinor indices). Note that an interaction such as $( \psi_q^i , \gamma^5 \psi_g^a)$ actually only involves the positive chirality part (with respect to $\gamma_5$) of $\psi_q^i$ in accordance with \cite{Krm99,CEKKLW05}.

\item  As in the Einstein--Yang--Mills system, we get a \emph{gluon-gluino-gluino} interaction from the continuous part of $D_A$ in the fermionic action:
\begin{align*}
\rinpr{\chi_g}{ig_3\gamma^\mu\mathbb{A}_\mu \psi_g}=
  ig_3\rinpr{\chi^c_g}{\gamma^\mu A^a_\mu\psi^b_g} \tr \left(T_c[T_a, T_b]\right) =  ig_3f_{abc}\rinpr{\chi^c_g}{\gamma^\mu \psi^b_g }A^a_\mu
\end{align*}
Similarly, we have the usual {\it quark-quark-gluon} interaction, which reads $\rinpr{\psi_q}{ ig_3 \gamma^\mu A_\mu \psi_q}$.

\item  From \eqref{eq:squark_kinetic} we can extract a \emph{squark-squark-gluon} term, that is of the form
\begin{align}
  &- g^2_3(g_3A_\mu\squark)_i(\partial^\mu\antisquark)^i - g_3^2(\partial_\mu\squark)^i(g_3\overline{A_\mu}\antisquark)_i\nonumber\\
  &\qquad= - g_3^3A^a_\mu (T_a)_{ij}\squark^j\partial^\mu(\antisquark)^i - g_3^3\partial_\mu(\squark)^iA^a_\mu (\overline{T_a})_{ij}\antisquark^{\,j}\nonumber\\
  &\qquad
=  g_3^3A^a_\mu (T_a)_{ij}[\antisquark^{\,i}(\partial_\mu\squark)^j - g_3^3(\partial^\mu\antisquark)^i\squark^j]\nonumber  
\end{align}

\item Equation \eqref{eq:squark_kinetic} provides us a \emph{squark-squark-gluon-gluon} term as well: 
\begin{align}
  g_3^2(g_3A^\mu\squark)^i(g_3\overline{A_\mu\squark})_i 
&= -g_3^4A^{\mu\,a}A_\mu^b(T_bT_a)_{ij}\antisquark_i\squark_j\nonumber\\
&=-\frac{1}{6}A^{\mu}_{\phantom{\mu}a}A_\mu^{\phantom{\mu}a}\antisquark_i\squark^i - 
 \frac{1}{2}d_{abc}\,A^{\mu\,a}A_\mu^b(T^c)_{ij}\antisquark_i\squark_j\nonumber\end{align}     
In going to the last line, we have use the identity
\begin{align}
 T_{b}T_{a} = \frac{1}{6}\delta_{ab}\id_3 + \frac{1}{2}(if_{bac} + d_{bac})T^c\nonumber,
\end{align}
where the term with $f_{abc}$ vanishes since $A^{\mu\,a}A_\mu^b$ is symmetric upon interchanging $a$ and $b$.\\

\item Finally, there is a \emph{four squark self-interaction} 
\begin{align}
  g_3^4|\squark(x)|^4 = g_3^4\squark(x)_i\antisquark(x)^i\squark(x)_j\antisquark(x)^j\nonumber,  
\end{align}
originating from the third term of the display in Theorem in \ref{thm:bos-action-sQCD}.
 
\end{itemize}

To summarize, all results are in perfect agreement with the literature, in the sense that all interactions are present and their form is precisely the same. In three terms that we compared however, we were off by two powers of the coupling constants and a sign. However, it are precisely these `erroneous' terms of the Lagrangian that are accompanied by a factor $f(0)$, in which we can absorb this excess of coupling constants. The minus sign is unresolved still, since $f$ has to be a positive function. There is one other unresolved issue: the constants appearing in our results do not in all cases match those of the literature. 
However, to properly address all these issues, one has to wait for a description of the full MSSM in terms of a noncommutative manifold ---since that is what we are comparing our model with here--- taking also into account isospin and hypercharge. This is part of future research.

One observation that we cannot refrain from doing is that the sum $S_b+S_f$ of the actions in Theorem \ref{thm:bos-action-sQCD} and \ref{thm:ferm-action-sQCD} is not supersymmetric. In fact, there appear squark mass terms as allowed in soft supersymmetry breaking (see for instance \cite{CEKKLW05} and references therein). We consider the presence of these terms as a merit of the above model, leaving the question open whether a description of the  spontaneous supersymmetry breaking mechanism responsible for these soft-breaking terms can be found within noncommutative geometry. Of course, a search for such a mechanism is motivated by the derivation of the Higgs spontaneous {\it gauge} symmetry breaking mechanism from a noncommutative manifold in the case of the Standard Model. Possibly, one of the noncommutative manifolds that appear in the classification of \cite{CC07b} will describe the supersymmetric theory with spontaneous supersymmetry breaking mechanism.

\appendix

\section{Fierz identities}\label{se:fiers_trans}
The topic of this section are (Euclidean) Fierz identities. For these identities in a Minkowskian background, we refer to eg. \cite{NP04}.

\begin{defin}[Orthonormal Clifford basis] Let $Cl(V)$ be the Clifford algebra over a vector space $V$ of dimension $n$. Then $\gamma_K := \gamma_{k_1}\cdots\gamma_{k_r}$ for all strictly ordered sets $K = \{k_1 < \ldots < k_r\} \subseteq \{1, \ldots, n\}$ form a basis for $Cl(V)$. If $\gamma_K$ is as above, we denote with $\gamma^K$ the element $\gamma^{k_1}\cdots\gamma^{k_r}$. The basis spanned by the $\gamma_K$ is said to be \textit{orthonormal} if $\tr\gamma_K\gamma_L = nn_K\delta_{KL}\ \forall\ K, L$. Here $n_K := (-1)^{r(r-1)/2}$, where $r$ denotes the cardinality of the set $K$ and with $\delta_{KL}$ we mean 
\begin{align}
  \delta_{KL} = \left\{\begin{array}{ll} 1\quad \text{if}\ K = L\\
                         0 \quad \text{else}\\
                       \end{array}.
                \right.
\end{align}
\end{defin}

\begin{exmpl}\label{exmpl:dim4} Take $V = \mathbb{R}^4$ en let $Cl(4, 0)$ be the Euclidean Clifford algebra [i.e. with signature \mbox{($+$\ $+$\ $+$\ $+$)}]. Its basis are the sixteen matrices
\begin{align}
 & 1                                  &                     &\quad \nonumber\\
 & \gamma_\mu                         &                     &\quad \text{(4 elements)}\nonumber\\
 & \gamma_\mu\gamma_\nu\quad          & \mu < \nu           &\quad \text{(6 elements)},\nonumber\\
 & \gamma_\mu\gamma_\nu\gamma_\lambda & \mu < \nu < \lambda &\quad \text{(4 elements)}\nonumber\\
 &\gamma_1\gamma_2\gamma_3\gamma_4=:\gamma_5.    &           &\nonumber
\end{align}
We can identify
\begin{equation}\label{eq:mink_eucl}
\begin{split}
 \gamma_1\gamma_2\gamma_3 &= \gamma_4\gamma_5,\qquad \gamma_1\gamma_3\gamma_4 = \gamma_2\gamma_5\\ 
 \gamma_1\gamma_2\gamma_4 &= - \gamma_3\gamma_5,\qquad \gamma_2\gamma_3\gamma_4 = -\gamma_1\gamma_5,
\end{split} 
\end{equation}
establishing a connection with the basis most commonly used by physicists.
\end{exmpl}

\begin{lem}[Completeness relation] If the basis of the Clifford algebra is orthonormal, it satisfies the following completeness relation:
\begin{align}
	\frac{1}{n}\sum_L n_L(\gamma^L)_{d}^{\phantom{d}c}(\gamma_L)_{a}^{\phantom{a}b} &= \delta_a^{\phantom{a}c}\delta_d^{\phantom{d}b}\label{eq:completeness}.
\end{align}
\end{lem}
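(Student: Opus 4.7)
The plan is to exploit the orthonormality relation $\tr(\gamma_K \gamma_L) = n\, n_K \delta_{KL}$ to expand an arbitrary element of the matrix algebra in the $\{\gamma_L\}$-basis, and then to read off the completeness identity by comparing coefficients of an arbitrary matrix entry.

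First I would check that $\{\gamma_L : L \subseteq \{1,\dots,n\}\}$ is linearly independent inside $\End(S)$, where $S$ is the spinor space carrying the irreducible Clifford representation: any relation $\sum_L c_L \gamma_L = 0$ becomes, after right-multiplication by $\gamma^K$ and taking the trace, $n\, n_K c_K = 0$, whence $c_K=0$ for every $K$. Since there are $2^n$ such elements and $\dim \End(S) = N^2 = 2^n$ for the irreducible representation of dimension $N = 2^{n/2}$ (the case $n=4$ of Example \ref{exmpl:dim4} being the one used in the body of the paper), the $\gamma_L$ span all of $\End(S)$ and thus form a basis.

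Next, for arbitrary $M \in \End(S)$, I would expand $M = \sum_L c_L \gamma_L$ and determine the coefficients by the same trick: $\tr(M \gamma^K) = \sum_L c_L \tr(\gamma_L \gamma^K) = n\, n_K c_K$, so that (using $n_K^2 = 1$)
\[
c_K \;=\; \frac{n_K}{n}\, \tr(M\gamma^K).
\]
Substituting back and writing the trace in index notation produces
\[
M_{a}^{\phantom{a}b} \;=\; \frac{1}{n} \sum_L n_L \, M_{d}^{\phantom{d}c}\, (\gamma^L)_{c}^{\phantom{c}d}\, (\gamma_L)_{a}^{\phantom{a}b}.
\]
Since this identity must hold for every matrix $M$, comparing the coefficients of the entry $M_{d}^{\phantom{d}c}$ on the two sides (equivalently, specialising $M$ to the elementary matrices) yields precisely the claimed completeness relation.

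I do not foresee any substantial obstacle beyond careful bookkeeping: one has to keep the raising and lowering of spinor indices consistent between the definition of $\gamma^L$, the trace formula, and the final positions of the Kronecker deltas. The only genuine input beyond elementary linear algebra is the irreducibility of the Clifford representation on $S$, which is what turns the count of $2^n$ linearly independent $\gamma_L$ into a full basis of $\End(S)$ rather than merely a proper subalgebra.
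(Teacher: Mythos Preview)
Your proof is correct and follows essentially the same route as the paper: expand an arbitrary element in the $\gamma_L$-basis, extract the coefficients via the trace orthogonality, and read off the completeness relation in index form. The only difference is that you are more explicit than the paper about why the $\gamma_L$ span all of $\End(S)$ (linear independence plus the dimension count $2^n = N^2$), whereas the paper simply takes this as given and applies the expansion directly to $\Gamma = \gamma_K$.
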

\begin{proof}
Since the $\gamma_K$ form a basis, we can write any element $\Gamma$ of the Clifford algebra as 
\begin{align}
	\Gamma &= \sum_K m^K \gamma_{K}\quad m_K \in \mathbb{C}\label{eq:completeness2},
\end{align}	
where the sum runs over all (strictly ordered) sets. By multiplying both sides with $\gamma^L$ and taking the trace we find the expression for the coefficient $m^L$ to be:
\begin{align}
	m^L &= \frac{1}{n}n_L\tr \Gamma\gamma^L\nonumber, 
\end{align}
Applying this result in particular to $\Gamma = \gamma_K$, and writing matrix indices explicitly, \eqref{eq:completeness2} yields
\begin{align}
	(\gamma_K)_a^{\phantom{a}b} &= \frac{1}{n}\sum_L n_L(\gamma_K)_c^{\phantom{c}d}(\gamma^L)_{d}^{\phantom{d}c}(\gamma_L)_{a}^{\phantom{a}b}\nonumber,
\end{align}
for which \eqref{eq:completeness} is required.
\end{proof}

\begin{thm}[(Generalized) Fierz identity]\label{thm:fierz} If for any two strictly ordered sets $K, L$ there exists a third strictly ordered set $M$ and $c \in \mathbb{N}$ such that $\gamma_K\gamma_L = c\,\gamma_M$, we have the four-spinor identity
\begin{align}
   \inpr{\psi_1}{\gamma^K\psi_2}\inpr{\psi_3}{\gamma_K\psi_4} &= -\frac{1}{n}\sum_L C_{KL}\inpr{\psi_3}{\gamma^L\psi_2}\inpr{\psi_1}{\gamma_{L}\psi_4}\nonumber\\
   &\qquad\qquad\qquad C_{KL} \in \mathbb{N}\label{eq:fierzf},
\end{align}
for any $\psi_1, \ldots, \psi_4$ in the $n$-dimensional spin representation of the Clifford algebra. Here we denote by $\inpr{.}{.}$ the inner product on the spinor representation.
\end{thm}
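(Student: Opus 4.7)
The plan is to reduce this to the completeness relation (the preceding lemma) combined with the Grassmann anticommutation of the spinor fields. I would start by writing both sides with explicit spinor indices:
\[
\inpr{\psi_1}{\gamma^K\psi_2}\inpr{\psi_3}{\gamma_K\psi_4}
= \bar\psi_{1,a}(\gamma^K)^a_{\phantom{a}b}\psi_2^b \,\bar\psi_{3,c}(\gamma_K)^c_{\phantom{c}d}\psi_4^d ,
\]
with implicit sum on all indices. The goal is to bring this into a form where $\psi_4$ sits next to $\bar\psi_3$ and $\psi_2$ sits next to $\bar\psi_1$, with the roles of the "second" and "fourth" spinors interchanged relative to the gamma structure.

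The first step is to use the anticommuting (Grassmann) nature of the classical fermionic fields to reorder the factors $\psi_2^b$ and $\bar\psi_{3,c}\psi_4^d$, collecting a global $-$ sign; this is the origin of the $-\tfrac{1}{n}$ prefactor on the RHS. After this reordering the contraction pattern becomes
\[
-\bar\psi_{1,a}(\gamma^K)^a_{\phantom{a}b}\,(\gamma_K)^c_{\phantom{c}d}\,\bar\psi_{3,c}\psi_4^d\psi_2^b.
\]
To re-couple indices so that $\psi_2^b$ pairs with $\bar\psi_{1,a}$ through some $\gamma_L$ and $\psi_4^d$ pairs with $\bar\psi_{3,c}$ through some $\gamma^L$, I would insert $\delta^a_{a'}\delta^c_{c'}$ and expand the delta structure using the completeness relation of the previous lemma applied to the $(a,b)$–$(c,d)$ block: concretely, to convert the naked Kronecker structure linking $b$ with $d$ and $a$ with $c$ into a sum $\tfrac{1}{n}\sum_L n_L(\gamma^L)^c_{\phantom{c}d}(\gamma_L)^a_{\phantom{a}b}$ of basis-element pairs. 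After this substitution the expression reassembles into a sum over $L$ of terms of the schematic shape $\bar\psi_3\,[\gamma_K\gamma^L\,\text{something}]\,\psi_2\cdot\bar\psi_1\,[\gamma_L\gamma_K]\,\psi_4$.

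At this point the hypothesis enters: each product $\gamma_K\gamma^L$ and $\gamma_L\gamma_K$ appearing in these bilinears equals a positive integer multiple $c\,\gamma_M$ of some strictly ordered basis element $\gamma_M$, so the right-hand side collapses to a sum of pure bilinears $\inpr{\psi_3}{\gamma^M\psi_2}\inpr{\psi_1}{\gamma_M\psi_4}$. Relabelling $M\to L$ and collecting the positive integer coefficients into a single $C_{KL}\in\mathbb{N}$ yields the stated identity.

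The main obstacle I expect is bookkeeping: tracking the signs arising from anticommuting the spinors past one another, from the factor $n_L$ in the completeness relation, and from the reduction $\gamma_K\gamma_L = c\,\gamma_M$ (which in general requires squaring moves past one another the individual $\gamma_{k_i}$'s, each producing a sign). The content of the hypothesis is precisely that all such signs combine into an overall positive integer, so that $C_{KL}\in\mathbb{N}$; verifying this carefully — rather than the algebraic rearrangement itself — is where most of the work lies.
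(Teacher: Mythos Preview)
Your approach is essentially the paper's: multiply the completeness relation by $(\gamma_K)$'s so that the delta structure becomes a sum over basis elements, invoke the hypothesis $\gamma_K\gamma_L=c\,\gamma_M$ to collapse the products, and pick up the overall minus sign from anticommuting the Grassmann spinors. The only difference is ordering: the paper first derives the purely gamma-matrix identity
\[
(\gamma_K)_a^{\phantom{a}e}(\gamma^K)_d^{\phantom{d}f}=\tfrac{1}{n}\sum_M C_{KM}(\gamma_M)_d^{\phantom{d}e}(\gamma^M)_a^{\phantom{a}f}
\]
and only then contracts with the four spinors, whereas you keep the spinors present throughout. One small slip: moving $\psi_2^b$ past the pair $\bar\psi_{3,c}\psi_4^d$ is two anticommutations and gives $+1$, not $-1$; the single minus sign actually arises from the full reordering $\bar\psi_1\psi_2\bar\psi_3\psi_4\to\bar\psi_3\psi_2\bar\psi_1\psi_4$ needed to match the index pattern on the right-hand side (equivalently, from swapping $\bar\psi_1$ and $\bar\psi_3$, as the paper puts it). This is exactly the bookkeeping you flagged as the main obstacle, and once straightened out the argument goes through as stated.
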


\begin{proof}
	We start by multiplying the completeness relation \eqref{eq:completeness} with $(\gamma_K)_{c}^{\phantom{c}e}(\gamma^K)_{b}^{\phantom{b}f}$ yielding
	\begin{align}
		 (\gamma_K)_{a}^{\phantom{a}e}(\gamma^K)_{d}^{\phantom{d}f} &= \frac{1}{n}\sum_L (\gamma_L\gamma_K)_{d}^{\phantom{d}e}(\gamma^L\gamma^K)_{a}^{\phantom{a}f}\nonumber,
	\end{align}
	or
	\begin{align}
	  (\gamma_K)_{a}^{\phantom{a}e}(\gamma^K)_{d}^{\phantom{d}f} &= \frac{1}{n}\sum_M C_{KM}(\gamma_M)_{d}^{\phantom{d}e}(\gamma^M)_{a}^{\phantom{a}f}\label{eq:fierz_interm},
	\end{align}
	by the assumption made. Here we have accommodated the proportionality constants in a matrix $C_{KL}$. Now we have to contract the above expression with the four spinors $\overline{\psi_1^a}, {\psi_2}_e, \overline{\psi_3^d}$ and ${\psi_4}_f$. But, remembering that they are Grassmann variables --- i.e.~their components anticommute--- we get one minus sign on the left hand side of \eqref{eq:fierz_interm} from interchanging $\psi_1$ and $\psi_3$. Hence we we arrive at the result. 
\end{proof}
Now how do we compute the constants $C_{KL}$? Just multiply \eqref{eq:fierz_interm} again by $(\gamma_L)_{e}^{\phantom{e}d} (\gamma^L)_{f}^{\phantom{f}a}$, yielding:
	\begin{align}
	 \tr(\gamma^K\gamma^L\gamma_K\gamma_L) &= \frac{1}{n}\sum_M C_{KM}\tr(\gamma^M\gamma_L)\tr(\gamma_M\gamma^L) \label{eq:valuec}.
	\end{align}
	On the other hand, we have
	\begin{align}
	  \gamma^K\gamma^L\gamma_K = f_{KL}\gamma^L\quad f_{KL} \in \mathbb{N}\ \ \text{(no sum over}\ L\text{)}\label{eq:deffmatrix}
	\end{align}	
	using the anticommutator repeatedly\footnote{For example: $\gamma^\mu\gamma^\lambda\gamma_\mu = (2-\dim V)\gamma^\lambda\quad\forall\ \lambda \in \{1, 2, \cdots, \dim V\}$.}. Putting \eqref{eq:deffmatrix} into \eqref{eq:valuec} we get:
	\begin{align}
	 f_{KL}\tr(\gamma^L\gamma_L) &= n\sum_M C_{KM} \delta^M_L\delta_M^L\nonumber
	\end{align}
	or
	\begin{align}
	  C_{KL} &= n_Lf_{KL} \label{eq:valuec2},
	\end{align}
	since
	\begin{align}
	 \tr(\gamma^L\gamma_L) &= 
			(-1)^{r(r-1)/2}n\nonumber,
	\end{align}
	by orthonormality.

\begin{cor}[Fierz identity] We work out one example of particular interest to us. Consider again $Cl(4, 0)$ ($n=4$) with the basis as in Example \ref{exmpl:dim4}. As can readily be checked, this basis satisfies the requirement for theorem \ref{thm:fierz}. The spinors we will contract with, are the four Weyl spinors: $\chi, \epsilon_- \in \mathcal{S}^-, \psi_1, \psi_2 \in \mathcal{S}^+$. We start with determining the numbers $f_{1r}, r = 0, \ldots, 4$ defined by $\gamma^\mu\gamma_L\gamma_\mu = f_{1r}\gamma_L$ (see above) where $r$ is the cardinality of $L$. We find the recursive relation 
\begin{align}
  \gamma^\mu 1 \gamma_\mu &= n\cdot 1 \equiv f_{10} 1\nonumber\\
  \gamma^\mu \gamma_\nu \gamma_\mu &= 2\gamma_\nu - \gamma^\mu\gamma_\mu\gamma_\nu = (2 - f_{10})\gamma^\nu \equiv f_{11}\gamma^\nu\nonumber\\
  & \ldots \nonumber\\
  \gamma^\mu(\gamma^{\nu_1}\cdots\gamma^{\nu_n})\gamma_\mu &= [2(-1)^{n-1} - f_{1(n-1)}]\gamma^{\nu_1}\cdots\gamma^{\nu_n}\nonumber\\
  &\equiv f_{1n}\gamma^{\nu_1}\cdots\gamma^{\nu_n}\qquad(n \leq 4)\nonumber
\end{align}
which gives
\begin{align}
  f_{10} = 4,\quad f_{11} = -2,\quad f_{12} = 0, \quad f_{13} = 2, \quad f_{14} = -4\nonumber
\end{align}
and consequently, using \eqref{eq:valuec2}
\begin{align}
  C_{10} = 4,\quad C_{11} = -2,\quad C_{12} = 0, \quad C_{13} = -2, \quad C_{14} = -4\nonumber.
\end{align}
Now applying \eqref{eq:fierzf} yields
\begin{align}
   \inpr{\chi}{\gamma^\mu\psi_1}\inpr{\epsilon_-}{\gamma_\mu\psi_2} &= -\frac{1}{4}C_{11}\inpr{\epsilon_-}{\gamma^\mu\psi_1}\inpr{\chi}{\gamma_\mu\psi_2}\nonumber\\
			&\quad- \frac{1}{4}C_{13}\inpr{\epsilon_-}{\gamma^\mu\gamma^\nu\gamma^\lambda\psi_1}\inpr{\chi}{\gamma_\mu\gamma_\nu\gamma_\lambda\psi_2}\nonumber
\end{align}
since only terms with an odd number of $\gamma$-matrices survive due to the different chirality of the spinors. Identifying the terms with three $\gamma$-matrices with $\pm\gamma^\mu\gamma_5$ as in \eqref{eq:mink_eucl}, we get
\begin{align}
	\inpr{\chi}{\gamma^\mu\psi_1}\inpr{\epsilon_-}{\gamma_\mu\psi_2} &= \frac{1}{2}\inpr{\epsilon_-}{\gamma^\mu\psi_1}\inpr{\chi}{\gamma_\mu\psi_2}\nonumber\\
			&\qquad +\frac{1}{2}\inpr{\epsilon_-}{\gamma^\mu\psi_1}\inpr{\chi}{\gamma_\mu\psi_2}\nonumber\\
			&=\inpr{\epsilon_-}{\gamma^\mu\psi_1}\inpr{\chi}{\gamma_\mu\psi_2}\label{eq:fierza}.
\end{align}
\end{cor}

\newcommand{\noopsort}[1]{}\def\cprime{$'$}

\end{document}